\begin{document}

\begin{frontmatter}

%% use the tnoteref command within \title for footnotes;
%% use the tnotetext command for theassociated footnote;
%% \author{Name\fnref{label2}}
%% \fntext[label2]{}

\title{Sum-of-Max Partition under a Knapsack Constraint\tnoteref{label1}}
\tnotetext[label1]{Supported by National Natural Science Foundation of China 62002394.}

\author[sysu]{Kai Jin\corref{cor1}}
\cortext[cor1]{Corresponding author~ -- ~Email: \url{cscjjk@gmail.com}~ -- ~Homepage: \url{cscjjk.github.io}}

\affiliation[sysu]{organization={Sun Yat-Sen University},
             addressline={Gongchang Road 66},
             city={Shenzhen},
             postcode={518000},
             state={Guangdong},
             country={China}}

\author[sysu]{Danna Zhang}
\author[sysu]{Canhui Zhang}

\begin{abstract}
Sequence partition problems arise in many fields, such as sequential data analysis, information transmission, and parallel computing. In this paper, we study the following partition problem variant: given a sequence of $n$ items $1,\ldots,n$, where each item $i$ is associated with weight $w_i$ and another parameter $s_i$, partition the sequence into several consecutive subsequences, so that the total weight of each subsequence is no more than a threshold $w_0$, and the sum of the largest $s_i$ in each subsequence is minimized.

This problem admits a straightforward solution based on dynamic programming, which costs $O(n^2)$ time and can be improved to $O(n\log n)$ time easily. Our contribution is an $O(n)$ time algorithm, which is nontrivial yet easy to implement. We also study the corresponding tree partition problem. We prove that the problem on the tree is NP-complete and we present an $O(w_0 n^2)$ time ($O(w_0^2n^2)$ time, respectively) algorithm for the unit weight (integer weight, respectively) case.
\end{abstract}

%\begin{graphicalabstract}
%\includegraphics{grabs}
%\end{graphicalabstract}
%\begin{highlights}
%\item Research highlight 1
%\item Research highlight 2
%\end{highlights}

\begin{keyword}
Sequence Partition \sep Tree Partition \sep Dynamic Programming Speed Up \sep Options Dividing Technique \sep Monotonic Queue.

%% PACS codes here, in the form: \PACS code \sep code

%% MSC codes here, in the form: \MSC code \sep code
%% or \MSC[2008] code \sep code (2000 is the default)
\end{keyword}

\end{frontmatter}

%\linenumbers

\theoremstyle{plain}
\newtheorem{remark}{Remark}
\newtheorem{note}{Note}
\theoremstyle{theorem}
\newtheorem{theorem}{Theorem}
\newtheorem{corollary}{Corollary}
\newtheorem{fact}{Fact}
\newtheorem{lemma}{Lemma}
\newtheorem{definition}{Definition}
\newtheorem{problem}{Problem}

\section{Introduction}

Sequence and tree partition problems have been studied extensively since 1970s,
  due to their importance in parallel processing \cite{seq-2,app-parallel-1,app-parallel-2},
    task scheduling \cite{app-task-scheduling-1,app-task-scheduling-2},
      sequential data analysis \cite{app-data-analyis-1,app-data-analyis-2,app-data-analyis-3},
        network routing and telecommunication \cite{seq-1,app-routing-1,app-routing-2,app-routing-3}.
In this paper, we study the following partition problem variant:

\begin{description}
\item[Sequence partition] Given a sequence of $n$ items $1,\ldots,n$, where item $i$ is associated with a weight $w_i$ and a parameter $s_i$
  (which can be interpreted as the significance or safety level or distance from origin or CPU delaying time or length of object,
     of item $i$, depending on the different applications of the problem),
       partition the sequence into several consecutive subsequences,
         so that the total weight of each subsequence is no more than a given threshold $w_0$ (this will be referred to as the Knapsack constraint),
          and the objective is the sum of the largest $s_i$ in each subsequence, which should be minimized.
    Throughout, we assume that $w_1,\ldots,w_n,s_1,\ldots,s_n$ are nonnegative.

\item[Tree partition] Given a tree of $n$ nodes $1,\ldots,n$, where node $i$ is associated with a weight $w_i$ and a parameter $s_i$,
  partition the tree into several connected components,
    so that the total weight of each component is no more than $w_0$
       and the sum of the largest $s_i$ in each component is minimized.
\end{description}

For convenience, denote $W_{a,b}=\sum_{v:a\leq v \leq b} w_v$ and
                  $S_{a,b}=\max_v\{s_v \mid a\leq v\leq b\}$.
The sequence partition algorithm can be solved in $O(n^2)$ time by a straightforward dynamic programming
  of the following formulation: 
    $$F[i] = \min_{j<i}\{F[j]+S(j+1,i) \mid W(j+1,i)\leq w_0\}\quad (1\leq i\leq n).$$
Those $j$ appeared in this formula (satisfying $j<i$ and $W(j+1,i)\leq w_0$) are called the options of $i$, and $F[j]+S(j+1,i)$ is referred to as the value of $j$.
Organizing all these values by a min-heap, the running time can be improved to $O(n \log n)$.
The main contribution of this paper is a more satisfactory $O(n)$ time algorithm.

To obtain our linear time algorithm, we abandon the min-heap and use a more sophisticated data structure for organizing the candidate values.
We first show that computing $F[i]$ reduces to finding the best s-maximal option,
  where an option $j$ is \emph{s-maximal} if $s_j>S(j+1,i)$.
Interestingly, the s-maximal options fall into two categories:
    As $i$ grows, some of these options will be out of service due to the Knapsack constraint, and
          we call them \emph{patient options} --- they admit the first-in-first-out (FIFO) property clearly,
       whereas the other options will be out of service due to the s-maximal condition,
          and we call them \emph{impatient options} --- they somehow admit exactly the opposite property first-in-last-out (FILO).
We then use a monotonic queue \cite{book-IntroAlg} for organizing the values of patient options
   and a monotonic stack \cite{book-IntroAlg} for organizing the values of impatient options.
As a result, we find the best patient and impatient options, and thus the overall best option, in amortized $O(1)$ time,
        thus obtaining the linear time algorithm.
The difficulty lies in analyzing and putting the options into correct container --- the queue or the stack.
  Nontrivial mechanisms are applied for handling this; see section~\ref{sect:chain}.
In a final simplified version of our algorithm,
    we further replace the monotonic queue and stack by a deque, see a discussion in subsection~\ref{subsect:alg-final-on}.

Although our algorithm is inevitably more difficult to analyze compared to its alternative (based on heap),
  it is still quite easy to implement ---
    in fact, our implementation using \emph{C/C++ program} (given in \ref{sect:code}) contains only 30 lines.
The alternative algorithm is implemented as well for a comparison of the real performances.
    Experimental results show that our algorithm is much faster as $n$ grows large (60 times faster for $n=10^6$ in some cases); see \ref{sect:experiment}.

\smallskip Our second result says that the decision version of our tree partition problem (see problem~\ref{problem:tree-partition} in section~\ref{sect:tree}) is NP-complete.
For proving this result,
  we first show that a Knapsack problem variant (see problem~\ref{problem:knapsack-2} in section~\ref{sect:tree}) is NP-complete,
    and then prove that this Knapsack problem reduces to the tree partition problem,
       which proves that the latter is NP-complete.

In addition, we consider restricted cases of the tree partition problem where all the weights are (1) integers, or (2) unit.
  We show that the unit weight case admits an $O(w_0n^2)$ time solution (be aware that $w_0=O(n)$ in this case), whereas the integer weight case admits an $O(w_0^2n^2)$ time solution.
  The running time analysis of the unit weight case is based on a nontrivial observation (Lemma~\ref{lemma:C}).

\subsection{Motivations \& Applications}

Our partition problems are not only of theoretical value (because they have clean definitions),
  but also of practical value, as they can be applied in real-life.

\medskip In physical distribution, $n$ cargos with weights $w_1,\ldots,w_n$ in a center
  need to be loaded into vehicles and then be delivered to different destinations along a route,
    having distances $s_1,\ldots,s_n$ away from the center.
  Those cargos coming in a line but not exceeding a constraint $w_0$ can be loaded into the same vehicle.
A good partition of cargos is required for saving the total transportation fee.

Sometimes, cargos have the same destination but have different significance / fragile levels $s_1,\ldots,s_n$ and
  each vehicle buys an insurance according to the highest level of cargos it contains.
A good partition saves the total insurance fee.

In a more realistic situation,
   there are $k$ types of vehicles, each of different weight limit and rates on oil consumption,
     and we are allowed to select a vehicle for each batch of cargos.
We can model this by an extended partition problem and solve it in $O(kn)$ time (using the ideas for case $k=1$); see subsection~\ref{subsect:extension}.

\medskip Similar applications may be found in telecommunication / network routing,
  where we may want to send $n$ messages on time using the satellite or cable.
    The total length of message in each block is limited, which corresponds to the Knapsack constraint.
    Moreover, the higher safety level a message has, the more expensive communication channel we must use for sending it.
    Each block chooses a channel according to the highest safety level of the message it contains,
      and we want to partition the messages into blocks so that the total expense is minimized.

\medskip The partition problem finds applications in parallel computing and job scheduling.
  We may also interpret $s_1,\ldots,s_n$ as processing times of jobs.
  Each job requires some resources and the total resources a batch of jobs can apply is limited.

\subsection{Related work}

Sequence partition problems have been studied extensively in literature.
Olstad and Manne \cite{seq-1} presented an $O(k(n-k))$ time algorithm for
    finding a partition of a given sequence of length $n$ into $k$ pieces $\gamma_1,\ldots,\gamma_k$ so that $\max_i f(\gamma_i)$ is minimized, where $f$ is any prescribed, nonnegative, and monotone function.
P{\i}nar and Aykanat \cite{seq-2} designed an $O(k\log n+n)$ time algorithm for a special case of this problem
  where $f(\gamma_i)$ is defined as the sum of the weights of elements in $\gamma_i$.
  As a comparison, the problem studied in \cite{seq-2} aims to minimize the Max-of-Sum, whereas our problem aims to minimize the Sum-of-Max.
  Zobel and Dart \cite{seq-3} gave an $O(n)$ time algorithm for the following variant:
    Given a threshold value $L$, find a partition into $k$ pieces $\gamma_1,\ldots,\gamma_k$ so that the total weight of each piece $\gamma_i$ is at least $L$ and $\sum_i (\text{the weight of $\gamma_i$}-L)^2$ is minimized.

Tree partition is more complicated than sequence partition,
  and it has drawn more attention over the last four decades, especially in theoretical computer science.
    Given a threshold $w_0$ and a tree whose nodes have assigned weights,
       Kunda and Misra \cite{tree-1} showed a linear time algorithm for
         finding a partition of the tree into $k$ components (by deleting $k-1$ edges),
               so that each component has a total weight no more than $w_0$, meanwhile $k$ is minimized.
     Note that this problem is a special case of our tree partition problem (where $s_i$'s are set to be 1).
       Parley \textit{et. al} \cite{tree-2} considered partitioning a tree into the minimal number of components
               so that the diameter of each component is no more than a threshold $D_0$.
     Becker and Schach \cite{tree-3} gave an $O(Hn)$ time tree partition algorithm
        towards the minimal number of components so that
           the weight of each component is no more than a threshold $w_0$ and
              the height of each component is no more than another threshold $H$.
     Ito \textit{et. al} \cite{tree-4} partitioned a tree in $O(n^5)$ time
        into the minimum (or maximum, respectively) number of components with weights in a given range.

Pioneers in this area have also studied the tree partition problems
   in which the number of components $k$ is fixed
      and an objective function defined by the components is to be optimized.
   For example, maximize the minimum weight of the components \cite{tree-k-1},
      or minimize the maximum weight of components \cite{tree-k-2}.
   Surprisingly, both problems can be solved in linear time by parametric search; see Frederickson \cite{tree-k-3,tree-k-4}.
     Yet the linear time algorithm is extremely complicated.
     Agasi \textit{et. al} \cite{tree-k-5} showed that a variant of the min-max problem is NP-hard.

\newcommand{\cost}{\mathsf{cost}}
\newcommand{\weight}{\mathsf{weight}}
\newcommand{\nex}{\mathsf{next}}

\section{A linear time algorithm for the partition problem}\label{sect:chain}

The partition problem can be solved by dynamic programming as shown below.
Let $F[i]$ be the optimal value of the following optimization problem:

\begin{quote}
Partition $[1,i]$ into several intervals $I_1,\ldots,I_j$ such that
their total cost $\sum_{k=1}^{j} \cost(I_k)$ is minimized,
  subject to the constraint that
    the weight $\weight(I_k)$ of each interval $I_k~(1 \leq k\leq j)$ is less than or equal to $w_0$.
    Throughout, $\cost(I_k)=\max_{v \in I_k} s_v$ and $\weight(I_k)=\sum_{v\in I_k} w_v$.
\end{quote}

For convenience, denote $W_{a,b}=\sum_{v:a\leq v \leq b} w_v$ and
                  $S_{a,b}=\max_v\{s_v \mid a\leq v\leq b\}$.
The following transfer equation is obvious.
    \begin{equation}\label{eq:1}
	F[i]=\min_{j:0\leq j<i} \{ F[j]+ S_{j+1,i} \mid W_{j+1,i}\leq w_0\}.
    \end{equation}

Clearly, the partition problem reduces to computing $F[1],\ldots,F[n]$.
(Note: we should store the optimum decision $j$ of each $F[i]$ together with the cost $F[i]$. In this way the optimum partition with cost $F[i]$ can be easily traced back.)

Using formula \eqref{eq:1}, we can compute $F[1],\ldots,F[n]$ in $O(n^2)$ time. For computing $F[i]$, it takes $O(n)$ times
   to search the options of $i$ and select the best.

\subsection{An $O(n\log n)$ time algorithm using heap}\label{subsect:alg-heap-nlogn}

\newcommand{\Os}{\mathbb{O}}

To speed up the naive quadratic time algorithm above,
      we have to search the best option of each $i$ more efficiently.
        This subsection shows that we can find the best option in $O(\log n)$ time by utilizing a heap data structure.

Denote $O_i=\{j \mid 0\leq j <i, W_{j+1,i}  \leq W\}$ for each $i~(1\leq i \leq n)$.
       Call each element $j$ in $O_i$ an \emph{option} of $i$.
An option $j$ is called an \emph{s-maximal option} of $i$ if $j>0$ and $s_j> S_{j+1,i}$.
  Denote by $\Os_i$ the set of s-maximal options of $i$.

Denote $o_i=\min{O_i}$ and note that $O_i=[o_i, i-1]$.

\begin{lemma} \label{lem:1}
Set $\Os_i \cup \{o_i\}$ contains an optimal option of $F[i]$. As a corollary:
    \begin{equation}\label{eq:4}
    	F[i]=\min_j \left\{F[j]+S_{j+1,i} \mid j \in \Os_i \cup \{o_i\} \right\}.
    \end{equation}
\end{lemma}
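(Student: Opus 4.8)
The plan is to show that the minimizer in \eqref{eq:1} can always be taken inside $\Os_i \cup \{o_i\}$ by a local-exchange argument: I will argue that any optimal option lying outside this set can be replaced by a strictly smaller option of no greater value. The first ingredient I would establish is that $F$ is nondecreasing, i.e.\ $F[j-1]\le F[j]$ for every $j\ge 1$. I would prove this by taking an optimal partition witnessing $F[j]$ and deleting item $j$ from its last interval $[a,j]$: if $a=j$ (a singleton) the cost drops by $s_j\ge 0$, and if $a<j$ the last interval becomes $[a,j-1]$, whose cost is $S_{a,j-1}\le S_{a,j}$, so the total cost can only decrease. In either case the Knapsack constraint is preserved because a nonnegative weight is removed, so the result is a feasible partition of $[1,j-1]$ of cost at most $F[j]$.

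The second ingredient is the key exchange inequality: if $j\in O_i$ is \emph{not} s-maximal and $j>o_i$, then $j-1$ is again an option of $i$ (since $O_i=[o_i,i-1]$ is contiguous, so $j-1\ge o_i$) and the value of $j-1$ does not exceed the value of $j$. Indeed, $j$ failing to be s-maximal means $s_j\le S_{j+1,i}$, whence $S_{j,i}=\max(s_j,S_{j+1,i})=S_{j+1,i}$; combining this with the monotonicity $F[j-1]\le F[j]$ gives $F[j-1]+S_{j,i}\le F[j]+S_{j+1,i}$, which is exactly the claim.

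Finally, I would let $j^\ast$ be the \emph{smallest} optimal option for $F[i]$ and argue that $j^\ast\in\Os_i\cup\{o_i\}$. Suppose not. Then $j^\ast\neq o_i$ together with $j^\ast\ge o_i$ forces $j^\ast>o_i\ge 0$, so in particular $j^\ast>0$; and $j^\ast\notin\Os_i$ with $j^\ast>0$ forces $j^\ast$ to be non-s-maximal. The exchange inequality then makes $j^\ast-1$ an option whose value is at most that of $j^\ast$, hence also optimal, contradicting the minimality of $j^\ast$. This proves the asserted inclusion, and \eqref{eq:4} is its immediate restatement as a formula for $F[i]$.

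I expect the main obstacle to be the monotonicity of $F$: the deletion argument must treat the singleton last interval as a separate case, and one must verify that removing an item never violates the Knapsack constraint — both points rely crucially on the nonnegativity of the $w_v$ and $s_v$. The boundary situations in the final exchange step (when $j^\ast=0$, or when $0\notin O_i$ so that $o_i>0$) also warrant a careful, though routine, check to confirm that $j^\ast-1$ is genuinely a valid option.
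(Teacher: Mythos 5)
Your proposal is correct and takes essentially the same route as the paper's proof: both hinge on the monotonicity $F[j-1]\le F[j]$ (established by deleting the last item from an optimal partition) and the exchange inequality $F[j-1]+S_{j,i}\le F[j]+S_{j+1,i}$ for any non-s-maximal option $j>o_i$. Your descent argument via the smallest optimal option simply makes rigorous the paper's terse concluding sentence, and your explicit handling of the singleton last interval is a careful (if minor) refinement of the paper's deletion step.
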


\begin{proof}
Assume $j > o_i$ and $j$ is not s-maximal. As $j$ is not s-maximal, $s_j \leq S_{j+1,i}$, therefore (a) $S_{j,i} = S_{j+1,i}$.
Moreover, we have (b) $F[i-1] \leq F[i]$. The proof of this inequality is as follows.
Let $\Pi$ be the optimal partition of $1\ldots i$.
 Let $\Pi'$ be the same as $\Pi$ except for deleting $j$ (from the last interval).
Clearly, the cost of $\Pi'$ is at most the cost of $\Pi$ and the latter equals $F[i]$.
  Moreover, the cost of the best partition of $1\ldots i-1$ is no more than that of $\Pi'$.
    Together, $F[i-1] \leq F[i]$.
Combining (a) and (b), $F[j-1] + S_{j,i} =F[j-1] + S_{j+1,i} \leq F[j] +S_{j+1,i}$, which means option $j-1$ is no worse than $j$
  in computing $F[i]$.
    By the assumption of $j$, it follows that there is a best option of $F[i]$ that is s-maximal or equal to $o_i$.
\end{proof}

Without loss of generality, assume $\Os_i = \{j_1,\cdots,j_t\}$, where $j_1 < \cdots < j_t$. According to the  definition of s-maximal: $s_{j_1} > \cdots > s_{j_t}>s_i$.

We use a deque $J$ to store $\Os_i$ during the computation of $F[1],\ldots,F[n]$.
When we are about to compute $F[i]$, the deque $J$ shall be updated as follows:

\begin{enumerate}
    \item $i-1$ joins $J$ (to the tail).
    \item Several options $j$ at the tail of $J$ are popped out, since they do not satisfy the ``s-maximal constraint'' $s_j>s_i$.
    \item Several options $j$ at the head of $J$ are popped out, since they do not satisfy the ``weight constraint'' $W_{j+1,i} \leq w_0$
\end{enumerate}

Clearly, each $j~(1\leq j\leq n)$ will be pushed in and popped out from $J$ at most once,
  so the total time for maintaining $J$ in the algorithm is $O(n)$.
Below we show how to compute $F[1],\ldots,F[n]$ using $J$ (i.e., $\Os_i$) and the equation \eqref{eq:4}.

\begin{definition}\label{def:next(j)}
    For any s-maximal option $j$ (i.e., $j\in J$),
     let $\nex(j)$ be the option to the right of $j$ in the list $J=\{j_1,\ldots,j_t\}$; and define $\nex(j_t) = i$.
     Note that $\nex(j)$ is variant while $i$ increases and note that
     $S_{j+1,i}=s_{\nex(j)}$. For convenience, denote $$\cost[j] = F[j] +s_{\nex(j)}$$

    Furthermore, let $j^* = \mathop{\arg\min}_{j \in J} \cost[j]$. To be precise, if $J=\varnothing$, define $j^*=-1$.
    Let $u = \mathop{\arg\max}_{o_i < j \leq i}\ s_j$ (if not unique, let $u$ be the largest one of them).
\end{definition}

Obviously, $u=\left\{
  \begin{array}{ll}
    \nex(o_i), & o_i \in  \Os_i; \\
    \min{\{ \Os_i \cap \{i\}\}}, & o_i \notin  \Os_i.
  \end{array}
\right.$
  (by the monotonicity of $J$).

\smallskip Equipped with these notations, equation \eqref{eq:4} can be simplified as follows:

\begin{equation} \label{eq:5}
	F[i]=
	\begin{cases}
		\min (F[o_i]+s_u, \cost[j^*]) &j^*\neq -1\\
		F[o_i]+s_u & j^*=-1
	\end{cases}
\end{equation}

\begin{proof}
    When $j^*\neq -1$, set $J$ is not empty, and we have
     \begin{equation*}
	\begin{aligned}
		F[i]&=\min\left(F[o_i] + S_{o_i+1,j}, \min_{j\in J}\{F[j]+S_{j+1,i}\}\right)~\quad \text{(according to \eqref{eq:4})}\\
		&=\min \left(F[o_i]+s_u, \min_{j\in J}\{F[j]+s_{\nex(j)}\}\right)\\
		&=\min \left(F[o_i]+ s_u, \min_{j\in J} \cost[j] \right)=\min (F[o_i]+ s_u, \cost[j^*])
	\end{aligned}
    \end{equation*}
    When $j^* = -1$, set $J=\Os_i=\varnothing$ and $F[i]=F[o_i] + S_{o_i+1, i} = F[o_i] + s_u$.
\end{proof}

We can compute $F[1],\ldots,F[n]$ in $O(n\log n)$ time based on formula \eqref{eq:5}.
Notice that $o_i$ can be computed in $O(1)$ amortized time, and so as $u$, which can be computed easily from $J$.
The challenge only lies in computing $j^*$ and $\cost[j^*]$.

For computing $j^*$ and $\cost[j^*]$ efficiently,
  we organize $\{(\cost[j], j)\mid j \in J\}$ into a min-heap. Then, $j^*$ can be found in O(1) time.
Note that $\cost[j]$ changes only if $\nex[j]$ changes, and moreover, at most one value in the $\nex$ array changes when $i$ increases by 1.
It follows that elements in $\{(\cost_j, j) \mid j \in J\}$ would change at most $O(n)$ times during the process of the algorithm.
Further since each change takes $O(\log n)$ time, the total running time is $O(n \log n)$ time.

\subsection{An $O(n)$ time algorithm using a novel grouping technique}
This section shows a novel \emph{grouping technique} that computes $j^*$ in $O(1)$ time.
For describing it, a concept called ``renew'' needs to be introduced.

\begin{definition} \label{def:renew}
 We say an s-maximal option $j$ is \emph{renewed} when $\nex(j)$ changes. An option $j$ is regarded as a new option after being renewed, which is different from the previous $j$ --- the same $j$ with different $\nex(j)$ will be treated differently.
 \end{definition}

With this concept, the way for an option $j$ to exit $J$ falls into three classes:

\begin{enumerate}
\item (as $i$ increases) $j$ pops out from the head of the deque, since the constraint $W_{j+1, i} \leq w_0$ is no longer satisfied.

\item (as $i$ increases) $j$ pops out from the tail of the deque, since the constraint $s_j>s_i$ is not satisfied.

\item (as $i$ increases) $j$ is renewed; the old $j$ pops out and a new $j$ is added to $J$.
\end{enumerate}

\textbf{Note.}
  1. Assume that the weight constraint $W_{j+1,i} \leq w_0$ is checked before the s-maximal constraint $s_j > s_i$.
      That is, if an option violates both constraints, we regard that it pops out in the first way.
2. In each iteration, after some options pop out in the second way,
  the last option $j$ in $J$ (if $J\neq \varnothing$) will be renewed.

\medskip We divide the options into two groups: the patient ones and impatient ones.

\begin{definition} \label{def:patien and impatient}
An option that exits $J$ in the first way is called a \emph{patient option}.
An option that remains in $J$ until the end of the algorithm is also called a \emph{patient option}.
Other options are called \emph{impatient options}.
\end{definition}

\begin{figure}[h]
  \centering
  \includegraphics[width=.7\textwidth]{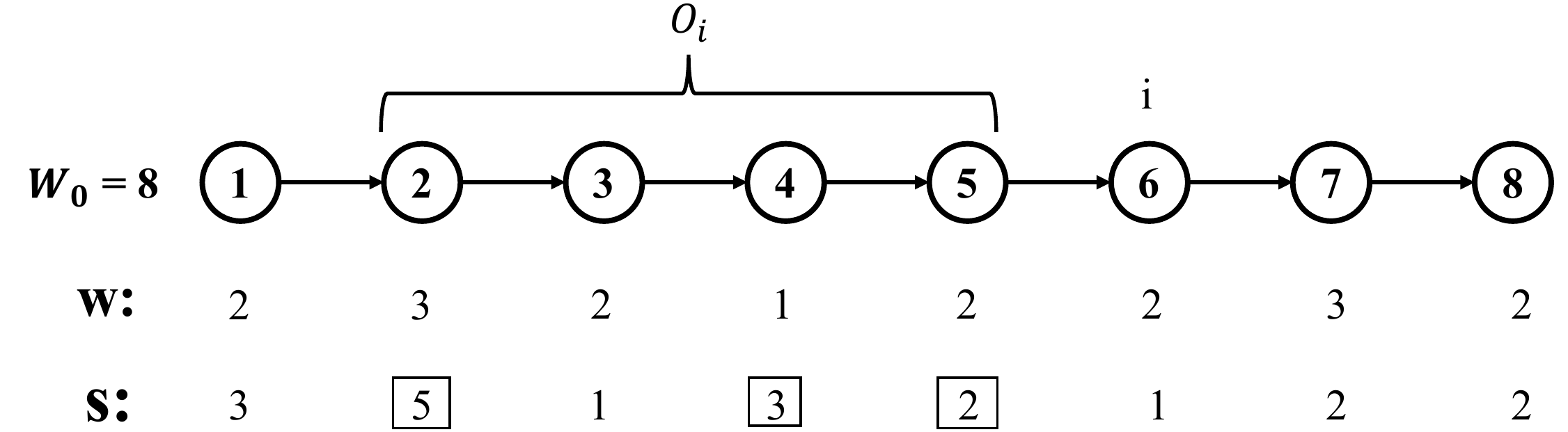}\\
  \caption{Patient and impatient options. Suppose $i=6$. The s-maximal options are 2, 4, 5.
    Option 2 would exit $J$ in the first way when $i=7$. So, it is patient.
    Options 4 and 5 are impatient.
    When $i=7$, option~5 would exit $J$ in the second way and option~4 in the third way.
    Note in particular that after option~4 is renewed, it becomes patient.}\label{fig:1}
\end{figure}

See Figure~\ref{fig:1} for an illustration of patient and impatient options.
   As can be seen from this illustration: An option $j$ may belong to different groups before and after renew,
      such as $j=4$ in the example.
   Because of this, the options before and after renew must be distinguished so that each option has its own group.

\medskip Denote the set of patient options by $\text{$J$}^{(p)}$ and the set of impatient options by $\text{$J$}^{(ip)}$.
  Obviously, $\text{$J$} = \text{$J$}^{(p)} \cup \text{$J$}^{(ip)}$. The idea of our algorithm is briefly as follows:
        First, find the best option in $\text{$J$}^{(p)}$ and the best option in $\text{$J$}^{(ip)}$.
            Then, choose the better one between them to be $j^*$.
    Two subproblems are yet to be resolved:

1. How to determine the group a newly added or renewed option belongs to?

2. How to efficiently obtain the optimal option in $\text{$J$}^{(p)}$ and $\text{$J$}^{(ip)}$ respectively?

Towards a linear time algorithm, we should resolve them in constant time.

\subsubsection{Determine whether an option is patient or impatient}

We associate each option $j~(1 \leq j \leq n)$ with a counter, denoted by $counter[j]$,
   which stores the number of times that $j$ would exit in the second or third way in the future.
For $j\in J$, we determine that $j$ is patient if and only if $counter[j]=0$.

In the following, we present a preprocessing algorithm (see Algorithm \ref{algo:2}) that obtains the counters at the initial state.
In the main process, when an option is to be renewed, we decrease its corresponding counter by 1;
 and if $counter[j]$ drops to 0 at that point, we get that option $j$ becomes patient from impatient.

\begin{algorithm}[h]
\LinesNumbered
\SetAlgoNoLine
\begin{small}
% \KwIn{$G(V,E), \I, w_{uv}, \forall (u,v)\in E, \lambda_{(s,i)}$ and $p^{(s,i)}$, for all $s\in V, i\in \I$}
% \KwOut{$\boldsymbol{x}^{\beta-NE}$}
$o \leftarrow 0$\;
\For{$i = 1$ \textbf{to} $n$}{
    \lWhile{ $W[o+1,i] > w_0 $}{
        $o++$
    }
    \lWhile{$J~\land~W(J.head+1, i) > w_0$}{
        $J$.deleteHead()
    }
    \lWhile{$J~\land~s[J.tail] \leq s[i] $}{
        $counter[J.tail]++$, $J$.deleteTail()
    }
    \lIf{\text{$J$}}{
        $counter[J.tail]++$
    }
    $J$.insertTail($i$)\;   %%\tcp*{insert $i$ at the tail of $J$}
    \lIf{$J.head = o$}{
        $u[i] \leftarrow J.second$
    }
    \lElse{
      $u[i] \leftarrow J.head$
    }
}
\end{small}
\caption{preprocess}
\label{algo:2}
\end{algorithm}

The preprocessing algorithm simulates the change of $J$ in advance.

Line 4-5 in Algorithm \ref{algo:2}: Deal with the options that exit $J$ by the first and second ways. Within the second way, the corresponding counter increases by 1.

Line 6 in Algorithm \ref{algo:2}: $J$.tail is renewed, thus $counter[J.tail]$++.

Line 8-9 in Algorithm \ref{algo:2}: Compute the value of $u$ for option $i$. Recall variable $u$ in Definition~\ref{def:next(j)} and \eqref{eq:5}. (Note: It would be troublesome to compute $u$ until the main process, since the main process no longer maintains $J$ as we will see.)

\smallskip Algorithm~\ref{algo:2} runs in $O(n)$ time. The analysis is trivial and omitted.

\newcommand{\Jp}{J^{(p)}}
\newcommand{\Jip}{J^{(ip)}}

\subsubsection{Compute the optimal option in $\text{$J$}^{(p)}$ and $\text{$J$}^{(ip)}$}

The following (trivial) observations are crucial to our algorithm.
\begin{enumerate}
\item When an option exit $\Jp$, it must be the smallest one in $\Jp$.
    In other words, the options in $\Jp$ (i.e. patient options) are \textbf{first-in-first-out (FIFO)}.
\item When an option exit $\Jip$, it must be the largest one in $\Jip$.
    In other words, the options in $\Jip$ (i.e. impatient options) are \textbf{first-in-last-out (FILO)}.
\end{enumerate}

Indeed, the options in $J$ are partitioned carefully into two groups (i.e. patient / impatient) such that they are either FIFO or FILO in each group. By doing this, the best option in each group might be found efficiently as shown below.

\smallskip We use a queue and a stack to store $\Jp,\Jip$, respectively.
  The maintenance of $\Jp,\Jip$ are similar to that of $J$, which are summarized in the following.

\begin{enumerate}
    \item Before computing $F[i]$, if $s_{i-1}>s_i$, the s-maximal option $i-1$ needs to be added into $\Jp$ or $\Jip$,
       depending on whether $counter[i-1]=0$ or not.
    \item Some options at the head of queue $\Jp$ are popped out, since they no longer satisfy the constraint ``$W_{j+1,i} \leq w_0$'', and some options at the top of stack $\Jip$ are popped out, since they do not satisfy the constraint ``$s_j>s_i$''.
    \item If $\Jip\neq \varnothing$ after step 2, the counter of $j=\Jip.top$ is decreased by 1, meanwhile $\nex(\Jip.top)$ becomes $i$. If $counter[j]$ drops to $0$, option $j$ becomes patient, and
          we transfer $j$ to $\Jp$ from $\Jip$ accordingly.
\end{enumerate}

\textbf{Note 1.} An option in $\Jp$ can leave only due to the weight constraint $W_{j+1,i} \leq w_0$,
  so it is unnecessary to check whether the tail of $\Jp$ satisfies $s_j > s_i$.
    Likewise, it is unnecessary to check the weight constraints of options in $\Jip$.

\textbf{Note 2.} When an option $j$ is transferred to $\Jp$ from $\Jip$, it can be added to the tail of queue $\Jp$ in $O(1)$ time.
  At this time, $j$ is renewed, which means that it is the largest option in $J$. Hence it can be directly added to the tail of $\Jp$.

Throughout, the options in $\Jp$ and $\Jip$ are in ascending order from head to tail, or bottom to top.
Each option joins and exits $\Jp$ and $\Jip$ at most once respectively.
  Therefore the maintenance of $\Jp,\Jip$ takes $O(1)$ amortized time.

\medskip Next, we show how to quickly compute the optimal options in $\Jp$ and $\Jip$
 respectively. To this end, we use the monotonic queue and monotonic stack.

First, we define the concept called \emph{dead}.

\begin{definition} \label{def:dead}
 Consider any option $j\in \Jp$ ($j\in \Jip$, respectively). If there is another option $j'$ in $\Jp$ ($\Jip$, respectively) with $\cost[j'] \leq \cost[j]$ and that $j'$ stays in $\Jp$ ($\Jip$, respectively) as long as $j$ does, then $j$ is regarded \emph{dead}.
(Note: In this definition, the renewed option is still regarded as a different option.)
\end{definition}

\begin{lemma} \label{lem:2}~

(1) Suppose $j,j'\in \Jp$. If $j<j'$ and $\cost[j'] \leq \cost[j]$, option $j$ is dead;

(2) Suppose $j,j'\in \Jip$. If $j'<j$ and $\cost[j'] \leq \cost[j]$, option $j$ is dead.
\end{lemma}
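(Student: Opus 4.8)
The plan is to prove both parts directly from the two structural facts established just before the lemma: the patient options in $\Jp$ leave only from the head and therefore obey FIFO, while the impatient options in $\Jip$ leave only from the top and therefore obey FILO. Both containers are kept in ascending order of index (head-to-tail for $\Jp$, bottom-to-top for $\Jip$), so the index of an option fixes its position and hence how long it survives relative to the others. In each part the strategy is identical: show that the cheaper option $j'$ remains in its container for at least as long as $j$ does, and then invoke Definition~\ref{def:dead} directly, since $\cost[j']\le\cost[j]$ is given by hypothesis.

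For part (1) I would argue as follows. Since $j<j'$ and both lie in $\Jp$, the ascending-order invariant places $j$ strictly closer to the head than $j'$. A patient option can exit $\Jp$ only through the head, and only because the weight constraint $W_{j+1,i}\le w_0$ fails; moreover, once an option is in $\Jp$ its counter is $0$, so it is never renewed and its $\cost$ value is frozen. Consequently, whenever some head elements are popped, $j$ is popped no later than $j'$, so at every future step at which $j$ is still present, $j'$ is present as well; that is, $j'$ stays in $\Jp$ as long as $j$ does. Combined with $\cost[j']\le\cost[j]$, Definition~\ref{def:dead} gives that $j$ is dead.

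For part (2) the argument is symmetric but uses FILO. Since $j'<j$ and both lie in $\Jip$, the ascending-order invariant places $j'$ strictly below $j$ in the stack. An impatient option can leave $\Jip$ only from the top --- either because the s-maximal constraint $s_j>s_i$ fails, or because it is renewed and possibly transferred to $\Jp$; in either case it must first be the top element. As $j'$ sits below $j$, it can reach the top only after $j$ has already left, so $j'$ is neither popped nor renewed while $j$ remains. Thus $j'$ stays in $\Jip$ at least as long as $j$ does, and with $\cost[j']\le\cost[j]$ Definition~\ref{def:dead} again yields that $j$ is dead.

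The ascending-order and single-end-exit bookkeeping is routine; the one point that needs care is the renew mechanism, which I expect to be the main obstacle. I must make precise that ``$j'$ stays as long as $j$ does'' is undisturbed by renews: in $\Jp$ this is clean because patient options are never renewed (their counter is already $0$), whereas in $\Jip$ I must observe that a renew touches only the top element, so the lower option $j'$ cannot be renewed (nor transferred out) before the upper option $j$ departs. Since Definition~\ref{def:dead} treats a renewed option as a genuinely new option, it suffices to compare the current $\cost$ values, which is exactly what the hypotheses supply; in particular the conclusion holds even if $j$ itself is renewed at the very next step, as $j'$ is certainly present at the current step.
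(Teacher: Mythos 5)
Your proof is correct and takes essentially the same approach as the paper's: both argue that the ascending-order invariant together with single-end exits (head of the queue for patient options, top of the stack for impatient ones) forces $j'$ to remain in its container at least as long as $j$, after which Definition~\ref{def:dead} applies directly. Your additional care about the renew mechanism (frozen costs in $J^{(p)}$, top-only renews in $J^{(ip)}$) only makes explicit what the paper's terser proof leaves implicit.
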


\begin{proof}
First, we prove (1). Because $j<j'$, we know $j$ is closer to the head than $j'$ in the queue,
  which means $j'$ leaves $\Jp$ later than $j$. By definition \ref{def:dead}, $j$ is dead.
Next, we prove (2). Because $j'<j$, we know $j$ is closer to the top than $j'$ in the stack,
  which means $j'$ leaves $\Jip$ later than $j$. By definition \ref{def:dead}, $j$ is dead.
\end{proof}

\newcommand{\Kp}{K^{(p)}}
\newcommand{\Kip}{K^{(ip)}}

To compute the optimal option of $\Jp$ or $\Jip$, we only need to focus on the options that are not dead.
The dead ones are certainly not optimal by definition.
(To be rigorous, there is always an optimal option that is not dead.)

Denote by $\text{K}^{(p)} =(p_1, \cdots, p_a)$ all the patient options that are not dead.

Denote by $\text{K}^{(ip)} =(q_1, \cdots, q_b)$ all the impatient options that are not dead.

Assume that $p_1 < \cdots <p_a$ and $q_1 < \cdots <q_b$.
As a corollary of Lemma \ref{lem:2}, $\cost[p_1] < \cdots < \cost[p_a]$, whereas $\cost[q_1] > \cdots > \cost[q_b]$.
Therefore, the optimal option in $\Jp$ is $p_1$ and the optimal option in $\Jip$ is $q_b$.

It remains to explain how to maintain $\Kp$ and $\Kip$ in $O(1)$ amortized time.
Because $\Kp$ is a monotonic subsequence of $\Jp$ and $\Kip$ is a monotonic subsequence of $\Jip$,
  the maintenance of $\Kp,\Kip$ resemble that of $\Jp,\Jip$. Details are summarized below.
  (Note: the cost of option $j$ is always stored in $\cost[j]$).

\begin{enumerate}
    \item After adding an option to the tail of $\Kp$, if $\cost[p_a] \leq \cost[p_{a-1}]$, then $p_{a-1}$ is dead, and hence it would be removed from deque $\Kp$. Repeat this until $\cost[p_a] > \cost[p_{a-1}]$. Zero or multiple options in $\Kp$ are deleted.
    \item After adding an option to the top of $\Kip$, if $\cost[q_b] \geq \cost[ q_{b-1}]$, then $q_b $ is dead, and it would be popped out of the stack directly. Otherwise, we have $\cost[q_1] > \ldots > \cost[q_b]$, and $q_b$ remains in the stack.
    \item When we want to delete some options from $\Kp$ or $\Kip$ (due to the weight or s-maximal condition),
       no additional operation is required except the deletion itself.
\end{enumerate}

\subsection{Combine $\Kp$ and $\Kip$ to simplify the above $O(n)$ algorithm}\label{subsect:alg-final-on}

The $O(n)$ time algorithm shown in the last subsection applies two data structures $\Kp$ and $\Kip$, which are monotonic queue or stack.
This subsection simplifies the algorithm by combining the two data structures into a deque.

First, we state a relationship between patient and impatient options.

\begin{figure}[h]
  \centering
  \includegraphics[width=.7\textwidth]{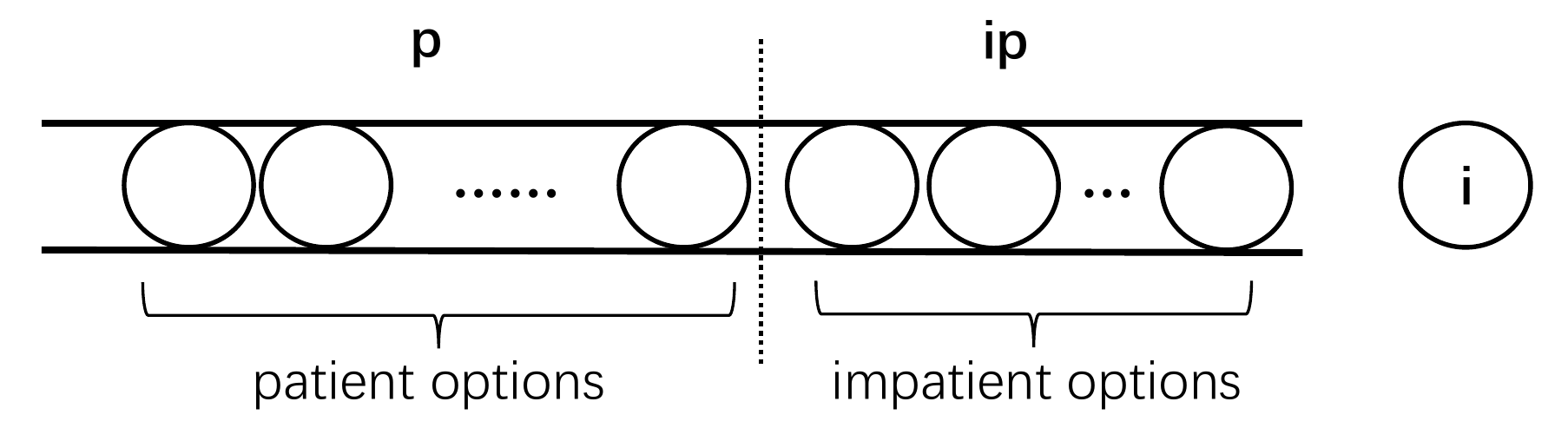}\\
  \caption{The distribution of patient and impatient options.}\label{fig:2}
\end{figure}

\begin{lemma} \label{lem:3}
The patient options are less than the impatient options in $J=\Os_i$.
\end{lemma}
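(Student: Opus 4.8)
The plan is to reformulate the patient/impatient distinction in terms of the \emph{side} of the deque $J=\Os_i$ from which an option eventually departs, and then to exploit the fact that $J$ is ordered by index while departures happen only at its two ends. Recall from the three exit ways that a weight (first-way) departure always happens at the head of $J$, whereas both an s-maximal (second-way) departure and a renewal (third-way) always happen at the tail. Consequently, an option is impatient if and only if, as $i$ grows, it eventually becomes the tail of $J$ (where it is either renewed or popped by the s-maximal condition), and it is patient if and only if it leaves through the head or survives to the end without ever reaching the tail.

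Given this reformulation, I would prove the equivalent statement that the impatient options form a suffix of $J$ (equivalently, the patient ones form a prefix), which is exactly the claimed inequality $p<q$ for every patient $p$ and impatient $q$. It suffices to show that if $j_a\in\Os_i$ is impatient then its right neighbor $j_{a+1}\in\Os_i$ is also impatient; the lemma then follows by induction along $J=(j_1<\cdots<j_t)$.

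For the inductive step, suppose $j_a$ is impatient, so there is a future time $i^\star$ at which $j_a$ is the tail of $J$. For $j_a$ to be the tail at $i^\star$, every option that lay to its right, in particular $j_{a+1}$, must already have left $J$. I claim $j_{a+1}$ cannot have left through the head: head removals occur in increasing order of index (the weight $W_{j+1,i}$ is largest for the smallest surviving index), so a head departure of $j_{a+1}$ would have been preceded by a head departure of the smaller index $j_a$, contradicting that $j_a$ is still present (as the tail) at $i^\star$. Hence $j_{a+1}$ must have departed at the tail, i.e.\ it was renewed or popped by the s-maximal condition, which makes it impatient, completing the induction.

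The main obstacle I anticipate is the bookkeeping around \emph{renewal}, since Definition~\ref{def:renew} treats a renewed option as a brand-new option: I must argue about the \emph{current} version of each option in $\Os_i$ and verify that the argument above really pins down its fate (and not that of some later version) --- specifically, that once $j_{a+1}$ reaches the tail it is immediately either renewed or popped, both of which label the current version impatient. A secondary subtlety is the boundary behaviour of options that remain in $J$ until the algorithm terminates (classified as patient) and of the option occupying the tail at the final step; these edge cases do not disturb the prefix/suffix structure but should be checked so that the characterization ``impatient $\Leftrightarrow$ eventually becomes the tail'' is exact.
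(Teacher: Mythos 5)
Your proof is correct and takes essentially the same route as the paper's: both arguments observe that an impatient option exits at the tail of $J$, so every option to its right must have exited strictly earlier, and could not have exited at the head while that impatient (smaller-index) option was still present in the deque --- hence everything to the right of an impatient option is impatient, i.e.\ the impatient options form a suffix of $J$. Your packaging of this as an induction on adjacent pairs, and the renewal/edge-case caveats you flag, are only cosmetic differences from the paper's direct argument.
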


\begin{proof}
 Take any impatient option $j$. Since $j$ is impatient,
   it will leave $J$ by the second or third way,
      so $j$ is at the tail of $J$ when it is removed.
 This means that the options to the right of $j$ must leave $J$ at its tail as well
  (they cannot leave at the head of $J$ since $j$ is over there, in front of them).
Therefore, the options to the right of $j$ must be impatient, which implies the lemma. See Figure~\ref{fig:2}.
\end{proof}

Recall that $\Kp$ and $\Kip$ consist of options that are not dead and $\Kp\subseteq \Jp$ and $\Kip\subseteq \Jip$.
As a corollary of Lemma \ref{lem:3}, $\Kp$ are to the left of $\Kip$.

\begin{figure}[h]
  \centering
  \includegraphics[width=.7\textwidth]{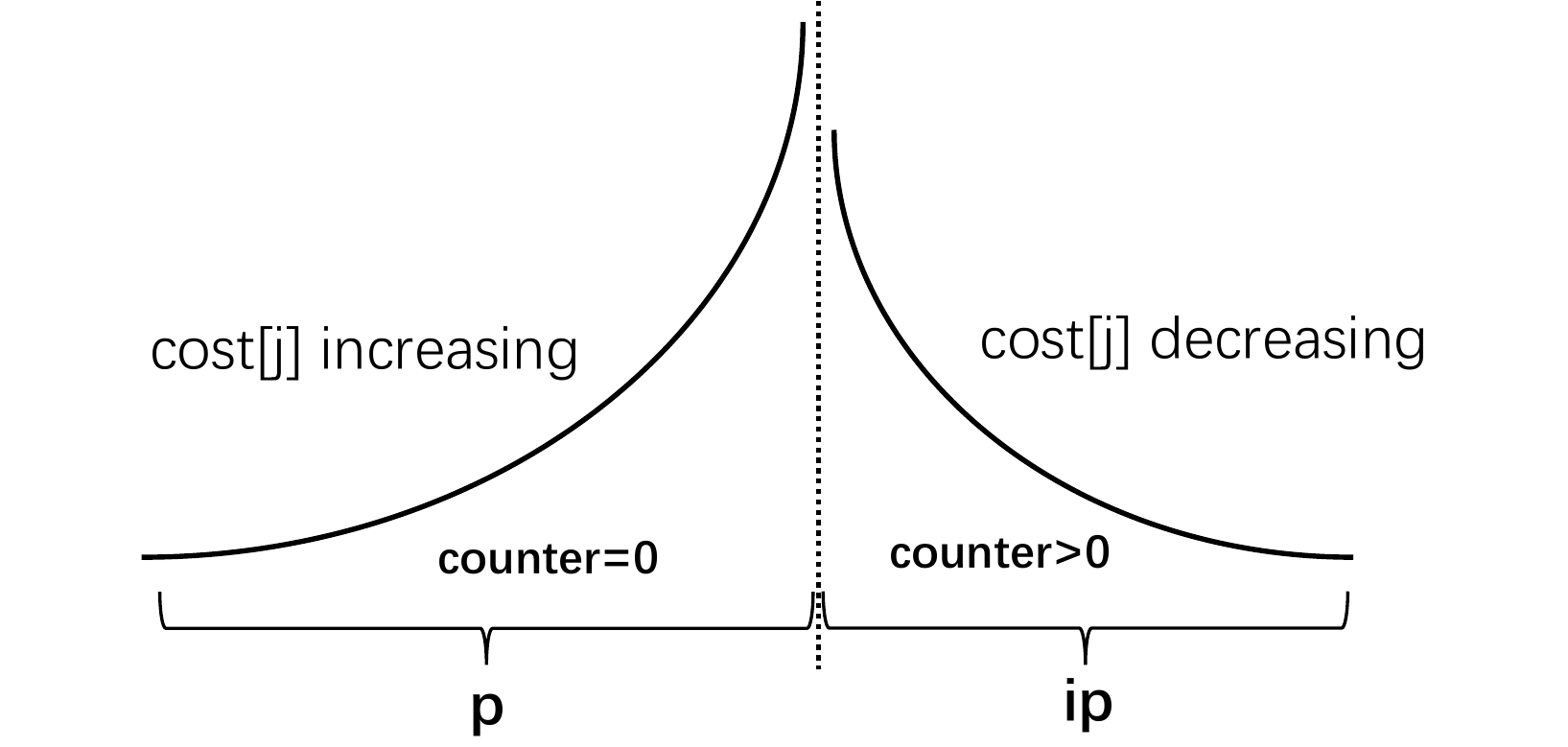}\\
  \caption{The cost distribution of options in $\Kp$ an $\Kip$}\label{fig:3}
\end{figure}

Our final algorithm replaces $\Kp$ and $\Kip$ by a deque $K$, whose left part (head) is $\Kp$ ($counter=0$) and the right part (tail) is $\Kip$ ($counter>0$).

The costs of options in the head (i.e. $\Kp$) is monotonically increasing,
  and the costs of options in the tail (i.e. $\Kip$ ) is monotonically decreasing, as shown in Figure~\ref{fig:3}.
    In particular, the optimal option in $K$ is at the head or tail of $K$.

\medskip The maintenance of $K$ is similar to the maintenance of $\Kp$ and $\Kip$ separately. Algorithm~\ref{algo:3} demonstrates the process for maintaining $K$ and computing $F[1],\ldots,F[n]$.
Recall the preprocessing algorithm in Algorithm~\ref{algo:2}.\smallskip

\begin{algorithm}[H]
\LinesNumbered
\SetAlgoNoLine
\begin{small}
$o \leftarrow 0$\;
\For{$i = 1$ \textbf{to} $n$}{
    \lWhile{$K~\land~W(K.head+1, i) > w_0$}{
        $K$.deleteHead()
    }
    \lWhile{$K~\land~s[K.tail] \leq s[i] $}{
        $K$.deleteTail()
    }
    \If{$K~\land~\cost[K.tail]\leq F[K.tail]+s[i]$}{
        $\cost[K.tail] \leftarrow F[K.tail] + s[i]$; \quad $counter[K.tail]--$\;
    }
    \lIf{$|K|>1~\land~counter[K.tail_2] > 0~\land~ \cost[K.tail_2] \leq \cost[K.tail]$}{
        $K$.deleteTail(); \quad // ``$K.tail_2$'' refers to the second last option in $K$
    }

    \lWhile{$|K|>1~\land~counter[K.tail]=0~\land~\cost[K.tail_2] \geq \cost[K.tail]$}{
        $K$.deleteTail2(); \quad // ``deleteTail2'' = delete the second last option in $K$
    }
    \lWhile{$W[o+1, i] > w_0$}{
        $o++$
    }
    $F[i] \leftarrow F[o] + s[u[i]]$\;
    \lIf{$K$}{
        $F[i] \leftarrow \min \{ \cost[K.head], \cost[K.tail], F[i]\}$
    }
    K.insertTail($i$); \quad $\cost[i] \leftarrow -1$\;
}
\end{small}
\caption{compute {$F[i]$}}
\label{algo:3}
\end{algorithm}

\smallskip Line~3 in Algorithm \ref{algo:3}: Some options at the head of $K$ exit by the first way.

Line~4 in Algorithm \ref{algo:3}: Some options at the tail of $K$ exit by the second way.

\smallskip Lines~5-7 in Algorithm \ref{algo:3}: After Line~4, the largest s-maximal option $J.tail$
        shall be renewed as $\nex(J.tail)$ becomes $i$.
        But be aware that $J.tail$ could be dead and if so, we need to do nothing.
        Observe that $J.tail$ is not dead if and only if $J.tail=K.tail$.
        Moreover, $J.tail=K.tail$ occurs if and only if $\cost[K.tail] \leq F[K.tail]+s[i]$.
        When the last condition holds (as checked by Line~5), we renew $K.tail$ at Line~6.
        (This avoids computing $J.tail$ and comparing it to $K.tail$).

\smallskip Lines~8-9 in Algorithm \ref{algo:3}: Remove the dead options.
Because a new option (including the renewing one) can join $K$ only at its tail,
  we can find dead options through comparing $K.tail_2$ and $K.tail$ as follows.
If $counter[K.tail_2] > 0$, the last two options of $K$ belong to $\Kip$.
 In this case, if $\cost[K.tail] \geq \cost[K.tail_2]$, $K.tail$ is dead and thus deleted.
When $counter[K.tail] = 0$, the last two options in $K$ belong to $\Kp$.
 We then check if $\cost[K.tail_2] \geq \cost[K.tail]$. If so, $K.tail_2$ is dead and thus deleted.
  Repeat it as long as $\cost[K.tail_2] \geq \cost[K.tail]$.

\medskip Figure~\ref{fig:4} shows an example where $n=8$.

\begin{figure}[h]
  \centering
  \includegraphics[width=.75\textwidth]{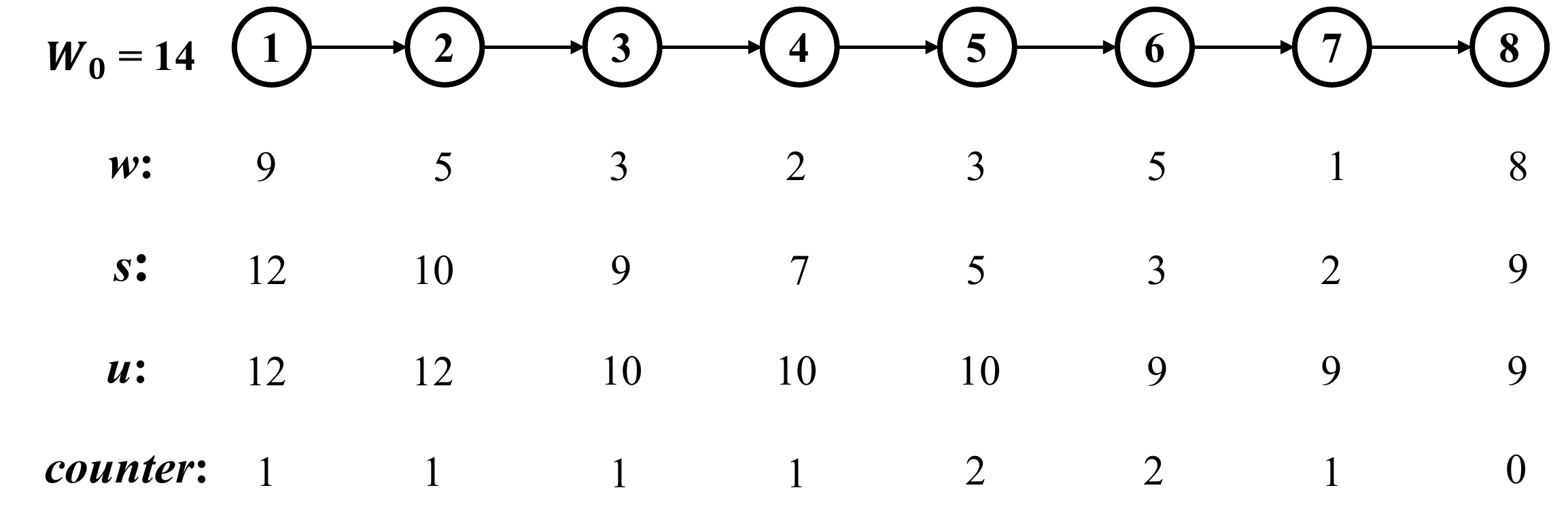}\\
  \caption{Illustration of an example.}\label{fig:4}
\end{figure}

We simulate the whole computation process for the example above and
   the deque $K$ at each iteration of $i$ is shown in Table~\ref{table:1}.

\begin{table}[h!]
    \centering
\begin{scriptsize}
    \begin{tabular}{cccccc}
    \hline
    \multicolumn{1}{|c|}{$\boldsymbol{i}$}    & \multicolumn{2}{c|}{$K$}                                                                                                                                                                                                 & \multicolumn{1}{c|}{$\boldsymbol{o_{i}}$} & \multicolumn{1}{c|}{$\boldsymbol{F[i]}$}  \\ \hline
    \multicolumn{1}{|c|}{}                    & \multicolumn{1}{c|}{}                         & \multicolumn{1}{l|}{NULL }                                      & \multicolumn{1}{c|}{}                     & \multicolumn{1}{c|}{}                     \\ \cline{2-3}
    \multicolumn{1}{|c|}{}                    & \multicolumn{1}{c|}{\textit{cost}}    & \multicolumn{1}{l|}{}                                                                                                                                                                                                 & \multicolumn{1}{c|}{}                     & \multicolumn{1}{c|}{}                     \\ \cline{2-3}
    \multicolumn{1}{|c|}{\multirow{-3}{*}{1}} & \multicolumn{1}{c|}{\textit{counter}} & \multicolumn{1}{l|}{}                                                                                                                                                                                                 & \multicolumn{1}{c|}{\multirow{-3}{*}{0}}  & \multicolumn{1}{c|}{\multirow{-3}{*}{12}} \\ \hline
    \multicolumn{1}{|c|}{}                    & \multicolumn{1}{c|}{}                                  & \multicolumn{1}{l|}{\,\footnotesize{\ding{172}}}                                                                                                                                              & \multicolumn{1}{c|}{}                     & \multicolumn{1}{c|}{}                     \\ \cline{2-3}
    \multicolumn{1}{|c|}{}                    & \multicolumn{1}{c|}{\textit{cost}}    & \multicolumn{1}{l|}{22}                                                                                                                                                                                               & \multicolumn{1}{c|}{}                     & \multicolumn{1}{c|}{}                     \\ \cline{2-3}
    \multicolumn{1}{|c|}{\multirow{-3}{*}{2}} & \multicolumn{1}{c|}{\textit{counter}} & \multicolumn{1}{l|}{~0}                                                                                                                                                                                                & \multicolumn{1}{c|}{\multirow{-3}{*}{0}}  & \multicolumn{1}{c|}{\multirow{-3}{*}{12}} \\ \hline
    \multicolumn{1}{|c|}{}                    & \multicolumn{1}{c|}{}                                  & \multicolumn{1}{l|}{\,\footnotesize{\ding{173}}}                                                                                                                                              & \multicolumn{1}{c|}{}                     & \multicolumn{1}{c|}{}                     \\ \cline{2-3}
    \multicolumn{1}{|c|}{}                    & \multicolumn{1}{c|}{\textit{cost}}    & \multicolumn{1}{l|}{21}                                                                                                                                                                                               & \multicolumn{1}{c|}{}                     & \multicolumn{1}{c|}{}                     \\ \cline{2-3}
    \multicolumn{1}{|c|}{\multirow{-3}{*}{3}} & \multicolumn{1}{c|}{\textit{counter}} & \multicolumn{1}{l|}{~0}                                                                                                                                                                                                & \multicolumn{1}{c|}{\multirow{-3}{*}{1}}  & \multicolumn{1}{c|}{\multirow{-3}{*}{21}} \\ \hline
    \multicolumn{1}{|c|}{}                    & \multicolumn{1}{c|}{}                                  & \multicolumn{1}{l|}{\,\footnotesize{\ding{173}}\qquad \hspace{2pt}\footnotesize{\ding{174}}}                                                                                               & \multicolumn{1}{c|}{}                     & \multicolumn{1}{c|}{}                     \\ \cline{2-3}
    \multicolumn{1}{|c|}{}                    & \multicolumn{1}{c|}{\textit{cost}}    & \multicolumn{1}{l|}{21\qquad\quad 28}                                                                                                                                                                        & \multicolumn{1}{c|}{}                     & \multicolumn{1}{c|}{}                     \\ \cline{2-3}
    \multicolumn{1}{|c|}{\multirow{-3}{*}{4}} & \multicolumn{1}{c|}{\textit{counter}} & \multicolumn{1}{l|}{~0\qquad\quad ~~0}                                                                                                                                                                          & \multicolumn{1}{c|}{\multirow{-3}{*}{1}}  & \multicolumn{1}{c|}{\multirow{-3}{*}{21}} \\ \hline
    \multicolumn{1}{|c|}{}                    & \multicolumn{1}{c|}{}                                  & \multicolumn{1}{l|}{\,\footnotesize{\ding{173}}\qquad\hspace{3pt}\footnotesize{\ding{175}}}                                                                                               & \multicolumn{1}{c|}{}                     & \multicolumn{1}{c|}{}                     \\ \cline{2-3}
    \multicolumn{1}{|c|}{}                    & \multicolumn{1}{c|}{\textit{cost}}    & \multicolumn{1}{l|}{21\qquad\quad 26}                                                                                                                                                                        & \multicolumn{1}{c|}{}                     & \multicolumn{1}{c|}{}                     \\ \cline{2-3}
    \multicolumn{1}{|c|}{\multirow{-3}{*}{5}} & \multicolumn{1}{c|}{\textit{counter}} & \multicolumn{1}{l|}{~0\qquad\quad ~~0}                                                                                                                                                                          & \multicolumn{1}{c|}{\multirow{-3}{*}{1}}  & \multicolumn{1}{c|}{\multirow{-3}{*}{21}} \\ \hline
    \multicolumn{1}{|c|}{}                    & \multicolumn{1}{c|}{}                                  & \multicolumn{1}{l|}{\,\footnotesize{\ding{173}}\qquad\hspace{3pt}\footnotesize{\ding{175}}\qquad\hspace{3pt}\footnotesize{\ding{176}}}                                                & \multicolumn{1}{c|}{}                     & \multicolumn{1}{c|}{}                     \\ \cline{2-3}
    \multicolumn{1}{|c|}{}                    & \multicolumn{1}{c|}{\textit{cost}}    & \multicolumn{1}{l|}{21\qquad\quad 26\qquad\quad 24}                                                                                                                                                 & \multicolumn{1}{c|}{}                     & \multicolumn{1}{c|}{}                     \\ \cline{2-3}
    \multicolumn{1}{|c|}{\multirow{-3}{*}{6}} & \multicolumn{1}{c|}{\textit{counter}} & \multicolumn{1}{l|}{~0\qquad\quad ~~0\qquad\quad ~1}                                                                                                                                                    & \multicolumn{1}{c|}{\multirow{-3}{*}{2}}  & \multicolumn{1}{c|}{\multirow{-3}{*}{21}} \\ \hline
    \multicolumn{1}{|c|}{}                    & \multicolumn{1}{c|}{}                                  & \multicolumn{1}{l|}{\,\footnotesize{\ding{173}}\qquad\hspace{3pt}\footnotesize{\ding{175}}\qquad\hspace{3pt}\footnotesize{\ding{176}}
  \qquad\hspace{3pt}\footnotesize{\ding{177}}} & \multicolumn{1}{c|}{}                     & \multicolumn{1}{c|}{}                     \\ \cline{2-3}
    \multicolumn{1}{|c|}{}                    & \multicolumn{1}{c|}{\textit{cost}}    & \multicolumn{1}{l|}{21\qquad\quad 26\qquad\quad 24\qquad\quad 23}                                                                                                                          & \multicolumn{1}{c|}{}                     & \multicolumn{1}{c|}{}                     \\ \cline{2-3}
    \multicolumn{1}{|c|}{\multirow{-3}{*}{7}} & \multicolumn{1}{c|}{\textit{counter}} & \multicolumn{1}{l|}{~0\qquad\quad ~~0\qquad\quad ~1\qquad\quad ~1}                                                                                                                              & \multicolumn{1}{c|}{\multirow{-3}{*}{2}}  & \multicolumn{1}{c|}{\multirow{-3}{*}{21}} \\ \hline
    \multicolumn{1}{|c|}{}                    & \multicolumn{1}{c|}{}                                  & \multicolumn{1}{l|}{NULL}                                                                                                                                                                        & \multicolumn{1}{c|}{}                     & \multicolumn{1}{c|}{}                     \\ \cline{2-3}
    \multicolumn{1}{|c|}{}                    & \multicolumn{1}{c|}{\textit{cost}}    & \multicolumn{1}{l|}{}                                                                                                                                                                                                 & \multicolumn{1}{c|}{}                     & \multicolumn{1}{c|}{}                     \\ \cline{2-3}
    \multicolumn{1}{|c|}{\multirow{-3}{*}{8}} & \multicolumn{1}{c|}{\textit{counter}} & \multicolumn{1}{l|}{}                                                                                                                                                                                                 & \multicolumn{1}{c|}{\multirow{-3}{*}{5}}  & \multicolumn{1}{c|}{\multirow{-3}{*}{30}} \\ \hline
    \end{tabular}
\end{scriptsize}
    \caption{Simulation of the entire process of the example shown in Figure~\ref{fig:4}.} \label{table:1}
    \setlength\tabcolsep{15pt}
\end{table}

\begin{remark}
The reader may wonder whether the costs of the options in $K$ is monotonic (increase or decrease).
  If this were true, our algorithm can be simplified.
However, Table~\ref{table:1} shows that the answer is to the opposite.
 When $i=7$, there are two options in each of $\Kp$ and $\Kip$, so the costs of $K$ is not monotonic.
\end{remark}

\subsection{Extension to a multi-agent version of the problem}\label{subsect:extension}

In this subsection, we discuss an extension that not only partitions the subsequence but also assigns the parts to different agents.

\begin{problem}[Partition and assign problem]\label{problem:extended}
  Given $k$ threshold values $w_{0}^{(1)},\ldots, w_{0}^{(k)}$ together with
   $k$ coefficients $c_1,\ldots,c_k$.
  We have $n$ jobs $1,\ldots,n$ to process (in order), where job $i$ is associated with $(w_i,s_i)$.
	All parameters are nonnegative.
   A group of consecutive jobs $i,\ldots,j$ can be processed in a batch as follows:
    if $w_i+\ldots+w_j\leq w_{0}^{(a)}$ for some $a\in\{1,\ldots,k\}$, jobs $i,\ldots,j$ can be processed in a batch by an agent of type $a$,
       and the cost is $c_a \cdot \max\{s_i,\ldots,s_j\}$.
    Find a partition and assign an agent for each part that minimizes the total cost.
\end{problem}

Comparing to the original problem, we now have $k$ choices for each part.

\smallskip Gladly, our technique shown in the last subsections can be generalized to solving the extended problem.
Let $F[i]$ be the same as before. We have

\begin{equation}\label{eq:extended-dp}
	F[i]=\min \left\{
             \begin{array}{l}
               F_1[i]:=\min_{j:0\leq j<i} \{ F[j]+ c_1 \cdot S_{j+1,i} \mid W_{j+1,i}\leq w_{0}^{(1)}\} \\
					 \ldots\\
               F_k[i]:=\min_{j:0\leq j<i} \{ F[j]+ c_k \cdot S_{j+1,i} \mid W_{j+1,i}\leq w_{0}^{(k)}\}
             \end{array}
           \right..
 \end{equation}

The difficulty lies in computing $F_a[i]$ for $1\leq a \leq k$.

Denote $O^{(a)}_i=\{j \mid 0\leq j <i, W_{j+1,i}\leq w_0^{(a)}\}$ and $o^{(a)}_i=\min{O^{(a)}_i}$.
       Call each element $j$ in $O^{(a)}_i$ an \emph{$a$-option} of $i$.
An $a$-option $j$ of $i$ is regarded as \emph{s-maximal} if $j>0$ and $s_j> S_{j+1,i}$.
  Denote by $\Os_i^{(a)}$ the set of s-maximal $a$-options of $i$.

The following lemma is similar to Lemma~\ref{lem:1}; proof omitted.
\begin{lemma} \label{lem:1-extended}
Set $\Os^{(a)}_i \cup \{o^{(a)}_i\}$ contains an optimal option of $F_a[i]$. As a corollary:
    \begin{equation}\label{eq:4-extended}
    	F_a[i]=\min_j \left\{F[j]+c_a \cdot S_{j+1,i} \mid j \in \Os_i^{(a)} \cup \{o^{(a)}_i\} \right\}.
    \end{equation}
\end{lemma}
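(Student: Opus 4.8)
The plan is to mirror the proof of Lemma~\ref{lem:1} almost verbatim, replacing the single-agent quantities by their $a$-indexed counterparts and carrying the coefficient $c_a$ through each step. The two facts that drove the original argument were (a) that a non-s-maximal option satisfies $S_{j,i}=S_{j+1,i}$, and (b) the monotonicity $F[j-1]\leq F[j]$. Fact (a) is purely about the $s$-values and is unchanged here, so the only thing that genuinely needs rechecking is whether (b) survives the richer partition-and-assign structure of $F$.

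First I would re-establish the monotonicity of $F$ in the multi-agent setting. The deletion argument still applies: take an optimal partition-and-assignment of $1,\ldots,j$ and delete job $j$ from its last batch, which is processed by some agent of type $a$. The result is a feasible solution for $1,\ldots,j-1$ --- the last batch still obeys its threshold $w_0^{(a)}$ since its total weight only drops --- and its cost does not increase, because removing $j$ can only lower $\max_{v}s_v$ over that batch, and $c_a\geq 0$ turns this into a non-increase of the batch cost $c_a\cdot\max_v s_v$. Hence $F[j-1]\leq F[j]$. This is the one place where nonnegativity of $c_a$ is actually used, and it is exactly what the monotonicity argument needs; I expect this to be the main (if mild) obstacle, since it is the only step not obtained by blind transcription.

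Next I would run the descent step. Fix the agent type $a$ and let $j$ be any $a$-option of $i$ with $j>o^{(a)}_i$ that is not s-maximal. Non-s-maximality gives $s_j\leq S_{j+1,i}$, so $S_{j,i}=S_{j+1,i}$. Because $O^{(a)}_i$ is the contiguous range $[o^{(a)}_i,i-1]$ (by the same monotonicity of cumulative weights that makes $O_i$ an interval) and $j>o^{(a)}_i$, the index $j-1$ lies in $O^{(a)}_i$ and is thus a valid $a$-option of $i$. Combining $F[j-1]\leq F[j]$ with $S_{j,i}=S_{j+1,i}$,
\[
F[j-1]+c_a\cdot S_{j,i}=F[j-1]+c_a\cdot S_{j+1,i}\leq F[j]+c_a\cdot S_{j+1,i},
\]
so the $a$-option $j-1$ is no worse than $j$ for computing $F_a[i]$.

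Finally I would close with a minimality argument. Among all optimal $a$-options of $F_a[i]$, pick the smallest, say $j^\star$. If $j^\star>o^{(a)}_i$ and $j^\star$ were not s-maximal, the step above would yield the strictly smaller $a$-option $j^\star-1$ of no larger value, contradicting minimality. Hence $j^\star=o^{(a)}_i$ or $j^\star$ is s-maximal, i.e. $j^\star\in\Os^{(a)}_i\cup\{o^{(a)}_i\}$, which is the claim; equation~\eqref{eq:4-extended} then follows immediately as the stated corollary. Everything beyond the monotonicity check is a verbatim copy of Lemma~\ref{lem:1} with the factor $c_a$ carried along.
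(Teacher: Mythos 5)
Your proof is correct and matches the paper's approach: the paper actually omits the proof of Lemma~\ref{lem:1-extended}, stating only that it is ``similar to Lemma~\ref{lem:1}, proof omitted,'' and your argument is precisely that intended adaptation --- the descent step with the factor $c_a$ carried through, plus the minimality argument. You also correctly identified the one step that is not pure transcription, namely re-verifying the monotonicity $F[j-1]\leq F[j]$ in the partition-and-assign setting, where the nonnegativity of $c_a$ (and of the $s$-values) is what makes the deletion argument go through.
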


The difficulty lies in computing the right part of \eqref{eq:4-extended}.
We can maintain $J^{(a)}=\Os_i^{(a)}$ and find the best $j\in J^{(a)}$ in $O(\log n)$ time using a min-heap.
Or, we can partition $J^{(a)}$ into patient and impatient options as we did for $J$, and find the optimal option in each group in $O(1)$ time using a monotonic queue / stack.
Therefore, we can compute $F_a[i]$ in $O(1)$ amortized time. As a corollary,

\begin{theorem}
Problem~\ref{problem:extended} can be solved in $O(nk)$ time.
\end{theorem}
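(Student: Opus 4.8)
The plan is to treat the theorem as a direct, $k$-fold application of the single-agent machinery developed in subsections~\ref{subsect:alg-heap-nlogn} and~\ref{subsect:alg-final-on}. By Lemma~\ref{lem:1-extended} and equation~\eqref{eq:4-extended}, computing $F_a[i]$ amounts to finding the best option in $\Os_i^{(a)}\cup\{o_i^{(a)}\}$ with respect to the cost $\cost_a[j]:=F[j]+c_a\cdot s_{\nex(j)}$ (with $\nex$ taken inside $J^{(a)}=\Os_i^{(a)}$). First I would maintain, for each agent type $a\in\{1,\ldots,k\}$, an independent copy of the deque $K^{(a)}$ from subsection~\ref{subsect:alg-final-on}, storing the non-dead s-maximal $a$-options so that its head holds the best patient candidate and its tail the best impatient candidate. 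Running Algorithm~\ref{algo:3} on $K^{(a)}$ with threshold $w_0^{(a)}$ and cost $\cost_a$ then yields $F_a[i]$, and taking $F[i]=\min_{1\leq a\leq k} F_a[i]$ combines the $k$ answers.

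The key observation that makes this work is that all the structural lemmas survive the replication and the scaling by $c_a$. The s-maximal condition $s_j>S_{j+1,i}$ is agent-independent, but the weight constraint $W_{j+1,i}\leq w_0^{(a)}$, and hence $o_i^{(a)}$, the patient/impatient split, and the counters all depend on $a$; consequently each agent needs its own preprocessing. I would therefore run Algorithm~\ref{algo:2} once per agent, with threshold $w_0^{(a)}$, at a cost of $O(n)$ each and $O(nk)$ in total. Crucially, Lemma~\ref{lem:2} (dead options) and Lemma~\ref{lem:3} (patient options precede impatient ones) refer only to the relative order of options and to comparisons of cost values, never to the specific form of $\cost$; hence they hold verbatim with $\cost$ replaced by $\cost_a$, and the head and tail of $K^{(a)}$ continue to expose the optimal option of each group. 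The per-agent maintenance (head pops by weight, tail pops by the s-maximal condition, renewal, dead removal) is identical to Algorithm~\ref{algo:3} up to the substitution $s[i]\mapsto c_a\cdot s[i]$ in the renewal test, so each option enters and leaves each $K^{(a)}$ only a constant number of times, giving $O(1)$ amortized work per agent per step.

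The one point that requires care, and the part I expect to be the main obstacle to state cleanly, is the ordering of operations forced by the fact that the values $F[j]$ are shared across all $k$ structures yet are finalized only after the outer minimisation. At iteration $i$ I would first query all $k$ deques (together with the terms $F[o_i^{(a)}]+c_a\cdot s_{u^{(a)}[i]}$) to read off $F_1[i],\ldots,F_k[i]$, set $F[i]=\min_a F_a[i]$ in $O(k)$ time, and only then insert option $i$ into every $K^{(a)}$ with its cost deferred (the $\cost[i]\leftarrow -1$ device of Algorithm~\ref{algo:3}), to be renewed later. This is exactly the single-agent protocol applied in lockstep across agents, so correctness follows from the correctness already established for $k=1$; the only thing to verify is that no structure reads a stale or not-yet-computed $F$ value, which the query-before-insert order guarantees.

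Summing up, each of the $n$ iterations performs $O(k)$ amortized work for maintenance and queries plus $O(k)$ for the outer minimisation, and the preprocessing contributes $O(nk)$; hence the total running time is $O(nk)$, as claimed. Since the argument is essentially bookkeeping on top of the $k=1$ algorithm, I would keep the write-up short and lean entirely on the correctness and amortized analysis of subsection~\ref{subsect:alg-final-on}.
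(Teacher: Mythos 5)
Your proposal is correct and takes essentially the same route as the paper: the paper likewise computes each $F_a[i]$ by maintaining an independent copy of the patient/impatient grouping structure (monotonic queue/stack, per agent type $a$, with its own threshold $w_0^{(a)}$ and scaled costs), obtains the best option for each agent in $O(1)$ amortized time, and takes $F[i]=\min_a F_a[i]$, for $O(nk)$ total. Your write-up is in fact more explicit than the paper's single-paragraph argument, which leaves the per-agent preprocessing/counters and the query-before-insert ordering of the shared $F$ array implicit.
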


Moreover, our technique easily extends to solving the following
generalization of problem~\ref{problem:extended}
  (which is suggested by an anonymous reviewer).

\begin{problem}[Partition and assign problem (generalized)]\label{problem:extended-generalized}
  Given $k$ threshold values $w_{0}^{(1)},\ldots, w_{0}^{(k)}$.
  We have $n$ jobs $1,\ldots,n$ to process (in order),
     where job $i$ is described by $(w_i,s_{i,1},\ldots,s_{i,k})$.
	All parameters are nonnegative.
   A group of consecutive jobs $i,\ldots,j$ can be processed in a batch as follows:
    if $w_i+\ldots+w_j\leq w_{0}^{(a)}$ for some $a\in\{1,\ldots,k\}$, jobs $i,\ldots,j$ can be processed in a batch by an agent of type $a$,
       and the cost is $\max\{s_{i,a},\ldots,s_{j,a}\}$.
    Find a partition and assign an agent for each part that minimizes the total cost.
\end{problem}

Note that problem~\ref{problem:extended} is a special case where
   $s_{i,a}=c_a \cdot s_i$.

\begin{theorem}
Problem~\ref{problem:extended-generalized} can also be solved in $O(nk)$ time.
\end{theorem}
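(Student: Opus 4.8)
The plan is to reduce Problem~\ref{problem:extended-generalized} to running $k$ independent copies of the linear-time machinery of Section~\ref{sect:chain}, one per agent type, that share only the global array $F$ and the prefix sums of the weights. First I would write down the analogue of the decomposition \eqref{eq:extended-dp}: setting $S^{(a)}_{j+1,i}=\max\{s_{j+1,a},\ldots,s_{i,a}\}$, we have $F[i]=\min_{1\le a\le k}F_a[i]$ where
\begin{equation*}
F_a[i]=\min_{j:0\le j<i}\{F[j]+S^{(a)}_{j+1,i}\mid W_{j+1,i}\le w_0^{(a)}\}.
\end{equation*}
The only change from Problem~\ref{problem:extended} is that the common scalar $s_j$ is replaced, for agent $a$, by the agent-specific value $s_{j,a}$; the weights $w_i$, and hence the knapsack constraint up to the threshold $w_0^{(a)}$, are still shared across agents.

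Second, I would state the per-agent analogue of Lemma~\ref{lem:1-extended}: the set of s-maximal $a$-options (those $j>0$ with $s_{j,a}>S^{(a)}_{j+1,i}$) together with $o^{(a)}_i$ contains an optimal option for $F_a[i]$. Its proof carries over verbatim once $s_j$ is read as $s_{j,a}$: the identity $S^{(a)}_{j,i}=S^{(a)}_{j+1,i}$ for a non-s-maximal $j$ is immediate, and the monotonicity $F[i-1]\le F[i]$ (established in the proof of Lemma~\ref{lem:1}) is a property of the global $F$ alone --- delete the last item from the optimal partition of $[1,i]$, and nonnegativity of the weights keeps the knapsack constraints satisfied while the max-costs do not increase --- so it is agent-agnostic.

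Third, for each agent type $a$ I would instantiate the final deque-based algorithm (Algorithm~\ref{algo:3}) on the sequence $s_{1,a},\ldots,s_{n,a}$ with threshold $w_0^{(a)}$, keeping its own deque $K^{(a)}$ and its own $\nex$, $\cost$, $counter$, and $u$ arrays, but drawing $F[j]$ from the single shared array. The crucial point is that every place where Section~\ref{sect:chain} touches the parameters uses $s$ only through (i) the s-maximal test $s_j>S_{j+1,i}$, (ii) the ordering of the deque by $s$-value, and (iii) the cost $F[j]+s_{\nex(j)}$; substituting $s_{\cdot,a}$ for $s_\cdot$ throughout leaves each of these, and therefore the FIFO/FILO partition into patient and impatient options and the monotone-deque invariants, intact. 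At iteration $i$ I would run one step of each of the $k$ instances to obtain $F_1[i],\ldots,F_k[i]$, each in amortized $O(1)$ time, and set $F[i]=\min_a F_a[i]$; this value then feeds all $k$ instances at later iterations. Summing over $i$ and $a$ gives the $O(nk)$ bound.

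The main obstacle I expect is the independence argument in the third step. In the special case (Problem~\ref{problem:extended}) all agents share a single $s$-sequence up to a positive scale $c_a$, so the s-maximal structure --- and in particular which options are patient or impatient --- is common to every agent and could in principle be shared; one must check that this coincidence is never actually used, i.e. that the algorithm remains correct when each agent carries an arbitrary, unrelated sequence $s_{\cdot,a}$. I would settle this by observing that the only quantity coupling the agents, namely $F$, enters each instance solely as an additive constant $F[j]$ attached to an option $j$ at the moment $j$ is inserted into $K^{(a)}$, and $F[j]$ is already finalized at iteration $j<i$. Hence the $k$ instances interact only through this read-only value and are otherwise completely decoupled, so the per-agent analysis of Section~\ref{sect:chain} applies unchanged.
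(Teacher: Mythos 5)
Your proposal is correct and is essentially the approach the paper intends: the paper states this theorem without an explicit proof, remarking only that the technique for Problem~\ref{problem:extended} extends, and that technique is exactly your construction --- the decomposition $F[i]=\min_{1\le a\le k} F_a[i]$, the per-agent analogue of Lemma~\ref{lem:1-extended}, and $k$ decoupled instances of the deque machinery (each with its own threshold $w_0^{(a)}$, sequence $s_{\cdot,a}$, and auxiliary arrays) coupled only through the already-finalized shared values $F[j]$. Your explicit verification that the monotonicity $F[i-1]\le F[i]$ and the patient/impatient (FIFO/FILO) analysis nowhere use the fact that, in the special case, all agents' $s$-sequences agree up to a positive scale is precisely the detail the paper leaves implicit.
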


\section{Tree partition}\label{sect:tree}

In this section, we move on to the tree partition problem defined as follows.

\begin{problem}\label{problem:tree-partition}
Given two reals $w_0, b$ and a tree. Each vertex $v$ of tree is associated with two real parameters $w_v$ and $s_v$.
  Determine whether the tree (vertices) can be partitioned into several connected components $\{T_k\}$ such that,
\begin{equation}\label{eq:tree-partition}
	\sum_{v \in T_k} w_v \leq w_0~(\forall k) \quad and \quad \sum_{k} \max(s_v\mid v \in T_k)\leq b
\end{equation}
\end{problem}

\newcommand{\NP}{\mathcal{NP}}
\newcommand{\NPC}{\mathcal{NPC}}

Our first result about this problem is a hardness result:
\begin{theorem}\label{thm:npc}
Problem \ref{problem:tree-partition} belongs to $\NPC$, i.e., it is NP-complete.
\end{theorem}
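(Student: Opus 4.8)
The plan is to prove the two halves of NP-completeness separately, with membership in $\NP$ being routine and $\NP$-hardness carrying the substance, obtained by the gadget reduction sketched in the introduction. For membership, I would take the partition itself as the certificate, presented as the set of edges deleted from the tree. Since the tree has $n-1$ edges, this description has size $O(n)$, and deleting the chosen edges yields a forest whose components are identified in linear time. One then checks in polynomial time that every component has weight at most $w_0$ and that the total $\sum_k \max\{s_v \mid v\in T_k\}$ is at most $b$. Hence Problem~\ref{problem:tree-partition} lies in $\NP$, and what remains is the reduction.

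For $\NP$-hardness I would reduce from the Knapsack variant (Problem~\ref{problem:knapsack-2}), whose hardness is established beforehand; the essential geometry of the reduction is a \emph{star} gadget. Given Knapsack items with weights $a_1,\ldots,a_m$ and values $v_1,\ldots,v_m$, capacity $C$, and value target $V$ (assume as usual that $a_i\le C$ for every $i$, discarding oversized items), I would build a star with a center $c$ and leaves $\ell_1,\ldots,\ell_m$: set $w_c=0$ and $s_c=M$ for a constant $M$ exceeding $\sum_i v_i$, set $w_{\ell_i}=a_i$ and $s_{\ell_i}=v_i$, and finally put $w_0=C$ together with
\begin{equation*}
b = M + \sum_{i=1}^m v_i - V.
\end{equation*}

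Correctness rests on the rigid shape of any partition of a star: because the leaves are pairwise nonadjacent, every connected component either contains $c$ or is a single leaf. A partition is therefore described exactly by the set $A$ of leaves placed in the center component, whose weight $\sum_{i\in A}a_i$ must not exceed $w_0=C$, while every leaf outside $A$ is a singleton (feasible precisely because $a_i\le C$). Since $s_c=M$ dominates all leaf values, the center component contributes exactly $M$ to the objective, so the total objective equals $M+\sum_{i\notin A}v_i$, which is at most $b$ if and only if $\sum_{i\in A}v_i\ge V$. Thus a feasible tree partition with objective at most $b$ corresponds exactly to a Knapsack subset of weight at most $C$ and value at least $V$, giving the desired equivalence.

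I expect the main obstacle to be not the star construction but the bookkeeping that makes the reduction airtight: guaranteeing that $M$ genuinely pins the center component's maximum (so that grouping high-value leaves with the center never inflates the objective), and guaranteeing that each singleton leaf independently meets the weight bound so the tree instance is feasible exactly when the Knapsack instance is. This last point is precisely why a tailored Knapsack variant (Problem~\ref{problem:knapsack-2}) rather than the textbook version is convenient: normalizing the item weights and the capacity up front removes the degenerate cases and lets the equivalence hold without exceptions. The remaining task, checking that all magnitudes ($M$, $b$, and the weights) are polynomially bounded in the input size so that the map is a genuine polynomial-time reduction, is straightforward.
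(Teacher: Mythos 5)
Your proof is correct, but it takes a genuinely different---and in fact leaner---route than the paper. The paper's reduction gives the star's center the parameters $w_{n+1}=s_{n+1}=0$, so the cost of the center's component is $\max_{i\in A}s_i$; this max-term is exactly why the paper cannot reduce from textbook Knapsack directly, and instead introduces the intermediate variant Problem~\ref{problem:knapsack-2} (objective $\sum_{i\in A}s_i-\max_{i\in A}s_i$) and proves its hardness separately in Lemma~\ref{lemma:npc}. Your big-$M$ center pins the cost of the center's component at $M$ regardless of which leaves join it, which cancels the max-term and yields the equivalence $M+\sum_{i\notin A}v_i\le b \Leftrightarrow \sum_{i\in A}v_i\ge V$ against the \emph{standard} Knapsack conditions. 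Consequently, your reduction is really from Problem~\ref{problem:knap-sack}, not from Problem~\ref{problem:knapsack-2} as you state: the conditions you verify ($\sum_{i\in A}a_i\le C$ and $\sum_{i\in A}v_i\ge V$) are those of the textbook problem, and if your source were genuinely Problem~\ref{problem:knapsack-2} the argument would not match its minus-max objective. This mislabeling is harmless---indeed it is an improvement, since your version makes Lemma~\ref{lemma:npc} unnecessary for establishing the theorem. Relatedly, your closing remark that the tailored variant is what licenses the normalization $a_i\le C$ is off the mark: that normalization is a trivial without-loss-of-generality step valid for textbook Knapsack as well (the paper performs the identical step), and the variant's true role in the paper is the max-term issue that your big-$M$ trick sidesteps. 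Two small points to tighten: $M$ must exceed $\max_i v_i$, and your choice $M>\sum_i v_i$ guarantees this only because Knapsack values are nonnegative (worth saying explicitly); and your explicit NP-membership argument via the set of deleted edges is fine---the paper leaves this routine half implicit.
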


\subsection{A proof of the hardness result}

\begin{problem}\label{problem:knap-sack}
Given a sequence of real numbers $(w_1,\cdots,w_n,s_1,\cdots,s_n,w_0,s_0)$, where $w_i\geq 0~(1\leq i\leq n)$, determine whether there exists a set $A \subseteq [1,n]$ such that
\begin{equation}\label{eq:13}
	\sum_{i \in A} w_i \leq w_0 \quad \text{and} \quad \sum_{i \in A} s_i \geq s_0
\end{equation}
\end{problem}

\begin{problem}\label{problem:knapsack-2}
Given a sequence of real numbers $(w_1,\cdots,w_n,s_1,\cdots,s_n,w_0,s_0)$, where $w_i\geq 0~(1\leq i\leq n)$, determine whether there exists a set $A \subseteq [1,n]$ such that
\begin{equation}\label{eq:14}
	\sum_{i \in A} w_i \leq w_0 \quad \text{and} \quad \sum_{i \in A} s_i - \max_{i \in A} s_i \geq s_0
\end{equation}
\end{problem}

\begin{lemma}\label{lemma:npc}
Problem \ref{problem:knapsack-2} belongs to $\NPC$.
\end{lemma}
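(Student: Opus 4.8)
The plan is to establish the two halves of NP-completeness separately: membership in $\NP$, which is immediate, and $\NP$-hardness via a reduction from the classical $0/1$ knapsack decision problem stated as Problem~\ref{problem:knap-sack}, whose $\NP$-completeness is classical (e.g.\ by reduction from Subset-Sum). Membership is easy: a subset $A\subseteq[1,n]$ serves as a certificate, and both conditions $\sum_{i\in A}w_i\leq w_0$ and $\sum_{i\in A}s_i-\max_{i\in A}s_i\geq s_0$ can be checked in linear time. I would therefore dispose of membership in one sentence and concentrate on the hardness.

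For the reduction, the only obstacle is the extra term $-\max_{i\in A}s_i$ that distinguishes Problem~\ref{problem:knapsack-2} from Problem~\ref{problem:knap-sack}. My idea is to neutralize this term with a single \emph{dummy} item that is guaranteed to realize the maximum whenever it is chosen and that costs nothing, so that its contribution to the sum is cancelled exactly by the subtracted maximum. Concretely, starting from a Problem~\ref{problem:knap-sack} instance $(w_1,\ldots,w_n,s_1,\ldots,s_n,w_0,s_0)$ — which I may assume has $s_i\geq 0$, since knapsack remains $\NP$-complete under nonnegative values — I would construct a Problem~\ref{problem:knapsack-2} instance on items $1,\ldots,n,n+1$ by keeping $(w_i,s_i)$ for $i\leq n$, adding a new item with $w_{n+1}=0$ and $s_{n+1}=M$ for $M:=1+\sum_{i=1}^{n}s_i$, and leaving $w_0$ and $s_0$ unchanged. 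Since $M>\max_{i\le n}s_i$, the value $s_{n+1}$ is the maximum of any selected set containing item $n+1$. This transformation is clearly polynomial-time and of polynomial size.

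I would then argue equivalence of the yes-instances in both directions. In the forward direction, if $A\subseteq[1,n]$ solves Problem~\ref{problem:knap-sack}, then $A\cup\{n+1\}$ solves Problem~\ref{problem:knapsack-2}: the weight is unchanged since $w_{n+1}=0$, and $\sum_{i\in A\cup\{n+1\}}s_i-\max=\bigl(\sum_{i\in A}s_i+M\bigr)-M=\sum_{i\in A}s_i\geq s_0$. In the backward direction, given a solution $A'$ of Problem~\ref{problem:knapsack-2}, I split on whether $n+1\in A'$. If $n+1\in A'$, then the maximum equals $M$, and $B:=A'\setminus\{n+1\}$ satisfies $\sum_{i\in B}w_i\leq w_0$ together with $\sum_{i\in B}s_i=\sum_{i\in A'}s_i-M\geq s_0$, so $B$ solves Problem~\ref{problem:knap-sack}. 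If $n+1\notin A'$, then $A'\subseteq[1,n]$ and $\sum_{i\in A'}s_i\geq\sum_{i\in A'}s_i-\max_{i\in A'}s_i\geq s_0$ because $\max_{i\in A'}s_i\geq 0$, so $A'$ itself is a solution.

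The step I expect to require the most care is the backward direction in the case $n+1\notin A'$: the argument there relies on the maximum being nonnegative, which is exactly why I reduce from the nonnegative-value version of knapsack rather than the fully general one. The only other point to verify is the exact cancellation of $M$, which hinges on $M$ strictly exceeding every original value so that the dummy item always wins the maximum; the bound $M=1+\sum_i s_i$ guarantees this while keeping the instance polynomial in size. Combining membership with this reduction yields that Problem~\ref{problem:knapsack-2} is $\NP$-complete.
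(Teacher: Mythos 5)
Your proposal is correct and follows essentially the same route as the paper: both reduce from the classical knapsack decision problem (Problem~\ref{problem:knap-sack}) by adding a zero-weight dummy item whose value dominates all others, so that including it makes the subtracted maximum cancel exactly. The only cosmetic differences are that the paper sets $s_{n+1}=\max\{s_1,\ldots,s_n\}$ and handles the case $n+1\notin A$ by a WLOG step, whereas you use the strictly larger value $M=1+\sum_i s_i$ and an explicit case split (both arguments relying, as you correctly note, on the nonnegativity of the $s_i$).
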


\begin{proof}
We will prove that problem~\ref{problem:knap-sack} reduces to problem~\ref{problem:knapsack-2}.
  Further since problem~\ref{problem:knap-sack} $\in \NPC$ (which is well-known \cite{book-IntroAlg}), we obtain that problem~\ref{problem:knapsack-2} $\in \NPC$.

Assume $I = (w_1,\cdots,w_n,s_1,\cdots,s_n,w_0,s_0)$ is an instance of problem~\ref{problem:knap-sack}.
Let $I'= (w_1,\cdots,w_n,w_{n+1}=0,s_1,\cdots,s_n,s_{n+1}=\max\{s_1,\cdots,s_n\},w_0,s_0)$,
  which is an instance of problem \ref{problem:knapsack-2}.
Denote by $\mathbf{L},\mathbf{L}'$ the set of yes instances of problem~\ref{problem:knap-sack}, \ref{problem:knapsack-2} respectively.
It reduces to proving that $I \in \mathbf{L} \Leftrightarrow I' \in \mathbf{L}'$.

Assume that $I \in \mathbf{L}$. This means that there exists $A \subseteq [1,n]$ such that \eqref{eq:13} holds.
It is easy to see that $A \cup \{n + 1\}$ satisfies \eqref{eq:14}, therefore $I' \in \mathbf{L}'$.

Assume that $I' \in \mathbf{L}'$. This means that there exists $A \subseteq [1,n+1]$ such that \eqref{eq:14} holds.
Without loss of generality, assume $n+1\in A$; otherwise $A \cup \{n+1\}$ still satisfies \eqref{eq:14}.
It is easy to see that $A-\{n+1\}$ satisfies \eqref{eq:13}, therefore $I \in \mathbf{L}$.
\end{proof}

With the above lemma, we can now prove Theorem~\ref{thm:npc}.

\begin{proof}[Proof of Theorem~\ref{thm:npc}]
We will show that problem~\ref{problem:knapsack-2} reduces to problem \ref{problem:tree-partition}.
 Further since problem~\ref{problem:knapsack-2} $\in \NPC$ (see Lemma~\ref{lemma:npc}), we obtain that problem~\ref{problem:tree-partition} $\in \NPC$.

Consider an instance of problem~\ref{problem:knapsack-2}, $I = (w_1,\cdots,w_n,s_1,\cdots,s_n,w_0,s_0)$.
Without loss of generality, we assume that each $w_i$ is at most $w_0$.
  Otherwise, we can simply remove $(w_i,s_i)$ from the instance and the answer does not change.

 Let $b = (\sum_{i=1}^{n} s_i) - s_0$. Then, formula \eqref{eq:14} can be rewritten as follows.

\begin{equation}\label{eq:15}
	\sum_{i \in A} w_i \leq w_0 \quad \text{and} \quad \sum_{i \in A} s_i - \max_{i \in A} s_i \geq \left(\sum_{i=1}^{n} s_i\right) - b
\end{equation}
Equivalently,
\begin{equation}\label{eq:16}
	\sum_{i \in A} w_i \leq w_0 \quad \text{and} \quad \max_{i \in A} s_i + \sum_{i \notin A} s_i \leq b.
\end{equation}

Now, we construct an instance $I'$ of problem~\ref{problem:tree-partition} from $I$.
First, build a tree with vertices $1,\ldots,n$ and $n+1$, where $1,\ldots,n$  are all connected to $n+1$.
  The $i$-th~$(1\leq i\leq n)$ node is associated with $w_i$ and $s_i$.
   Moreover, set $w_{n+1}=s_{n+1}=0$.

Note that a partition of this tree corresponds to a subset $A$ of $[1,n]$ --- $A$ contains the labels of those vertices in the same connected component with $n+1$.
Moreover, the cost of the partition $\sum_{k} \max(s_i \mid i\in T_k)$ is $\max_{i \in A} s_i + \sum_{i \notin A} s_i$.
Therefore, subset $A$ satisfies formula \eqref{eq:16}
   if and only if the corresponding partition of $A$ satisfies formula \eqref{eq:tree-partition}.
It follows that $I$ is a yes instance of problem \ref{problem:knapsack-2} if and only if
   $I'$ is a yes instance to problem~\ref{problem:tree-partition}.
Hence the reduction works.
\end{proof}

\subsection{A dynamic programming approach for the case of unit or integer weight}

This subsection considers the tree partition problem under the restriction that
  all the nodes have a unit weight, or more generally, all the nodes have integer weights.
   A dynamic programming approach is proposed, which takes $O(w_0n^2)$ time and $O(w_0^2n^2)$ time, respectively,  for the mentioned unit and integer weight case.
   As we will see, the analysis that it takes $O(w_0n^2)$ time rather than $O(w_0^2 n^2)$
for the unit weight case is nontrivial and is based on interesting observations.

\smallskip Below we mainly focus on the case of unit weight, i.e., assume $w_i$'s are all 1.
It can be easily seen that our approach extends to the integer weight case.

Denote the given tree by $T$, and denote by $T_v$ the subtree rooted at vertex $v$.

\begin{definition}
See Figure~\ref{fig:11}. In any partition of $T_v$, the component containing $v$ is called the \emph{growing component},
  and the other components are called \emph{grown components}. (Within this subsection, a \emph{component} is short for a connected component of $T_v$.)  The \emph{grown part} refers to the set of all grown components.
\end{definition}

For a vertex $v$ and integers $j~(1 \leq j \leq w_0)$ and $k~(1 \leq k \leq n)$,
   let $f[v][j][k]$ be the minimum cost of grown part,
      among all the partitions of $T_v$ whose growing component
        has exactly $j$ nodes and has no $v'$ with $s_{v'}>s_k$. Formally,
\begin{equation}\label{eq:9}
f[v][j][k] = \min_{\begin{subarray}{c}
                \Pi: ~\text{partition of }T_v\text{ with $j$ nodes}\\
                \text{in the growing component, and}\\
                \text{$s_{v'}\leq s_k$ for each such node $v'$.}\end{subarray}} \text{the cost of the grown part of }\Pi.
\end{equation}

To be clear, the cost of the grown part is the total costs of the grown components.
  Moreover, we define $f[v][j][k]= \infty$ in case there is no such partition.

\smallskip Let $F[v]$ be the cost of the optimal partition of $T_v$. Clearly,

\begin{equation}\label{eq:10}
F[v] = \min_{j,k}\{ f[v][j][k] + s_k\}
\end{equation}

\begin{figure}[h]
%\setcaptionwidth{6cm}
\begin{minipage}[b]{.55\textwidth}
  \centering
  \includegraphics[width=\textwidth]{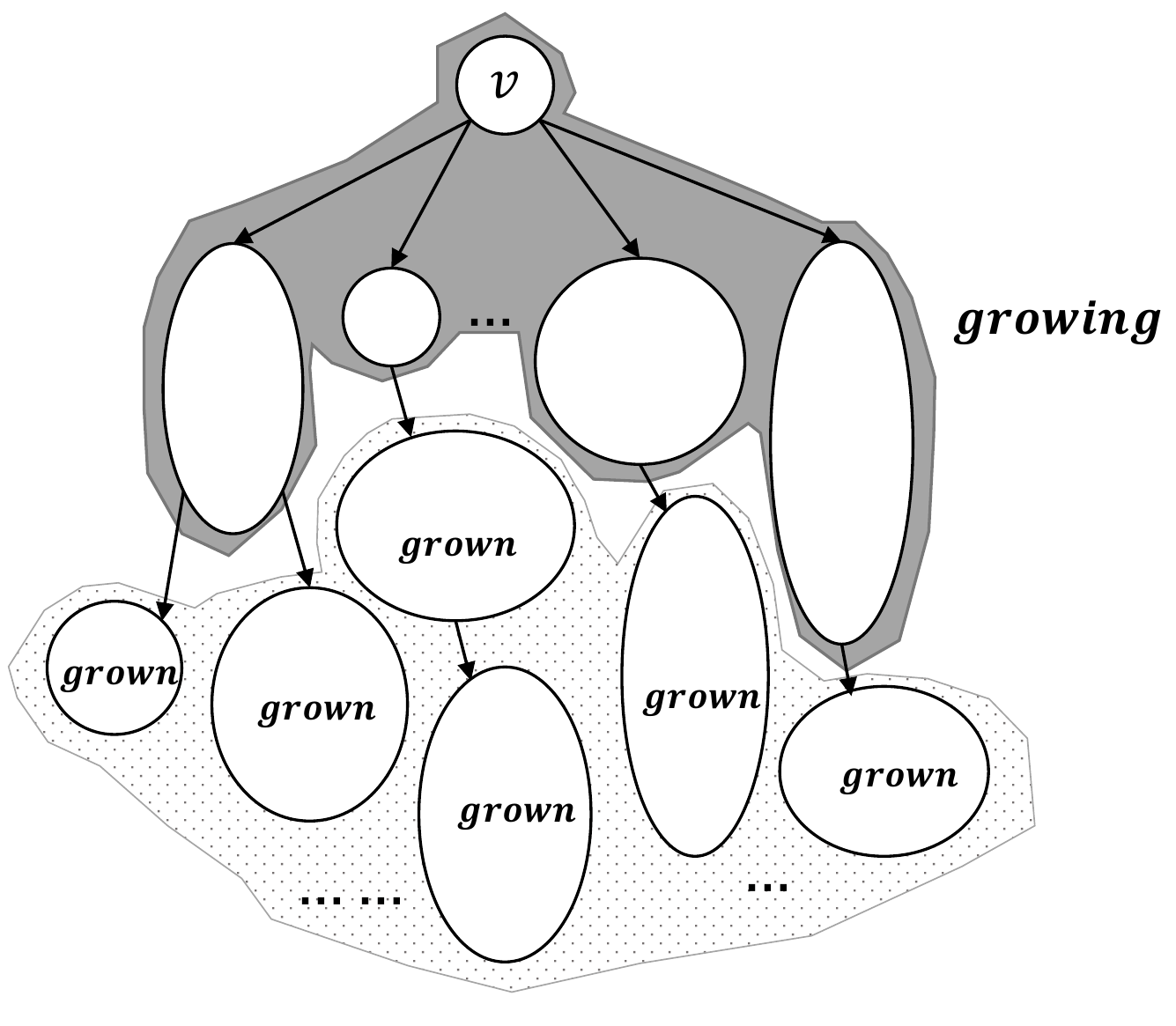}\\
  \caption{Illustration of growing and grown components.}\label{fig:11}
\end{minipage}
\begin{minipage}[b]{.45\textwidth}
  \centering
  \includegraphics[width=.9\textwidth]{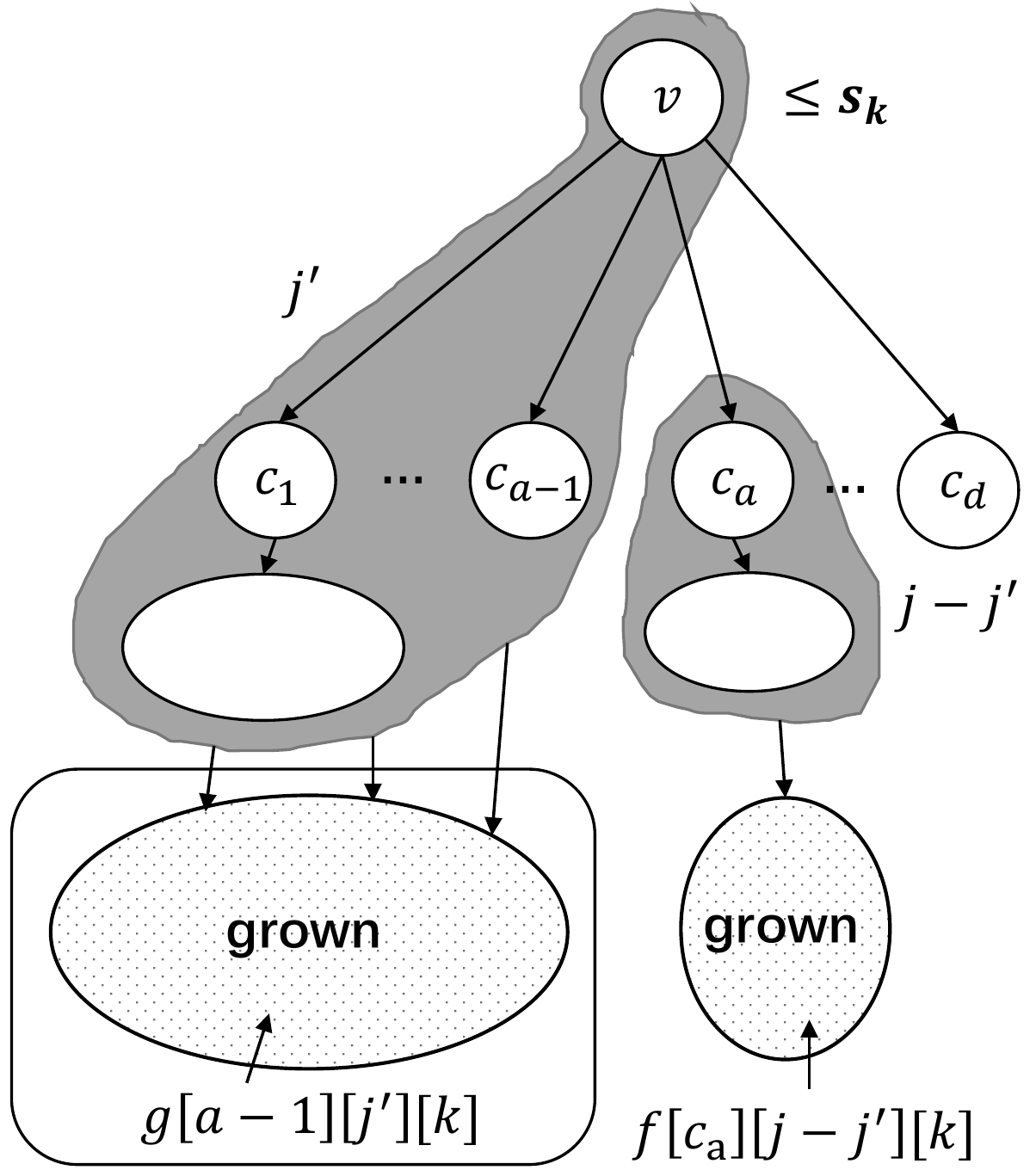}\\
  \caption{Illustration of the computation of $g[a][j][k]$.}\label{fig:12}
\end{minipage}
\end{figure}

We address the computation of $f[v][j][k]$ in the following.

Fix $v$. Assume $v$ has $d$ children $c_1,\ldots, c_d$ (left to right).
Denote by $T_v^a~(0 \leq a\leq d)$ the tree obtained by deleting $c_{a+1}, \cdots, c_d$ and all their descendants from $T_v$.
Let $g[a][j][k]$ be the minimum cost of the grown part,
    among all partitions of $T_v^a$ whose growing component has $j$ vertices and has no $v'$ with $s_{v'}>s_k$.
       To be clear, $g[a][j][k]=\infty$ if no such partition exists. Note that $T_v=T_v^d$ and
\begin{equation} \label{eq:11}
    f[v][j][k]=g[d][j][k].
\end{equation}

It reduces to computing $g[a][j][k]~~(0\leq a\leq d, 1\leq j\leq w_0, 1\leq k \leq n)$.

Assume that $s_v\leq s_k$ and $a>0$. Otherwise it is trivial to get $g[a][j][k]$.

Now, note that $d>0$ (as $a>0$) and therefore $v$ is not a leaf. We have
\begin{eqnarray}\label{eq:12}
        g[a][j][k]&=&\min_{1 \leq j'\leq j} \{g[a-1][j'][k] + \Delta_{j-j'} \},\\
        \text{where }\Delta_{x}&=&\left\{
                      \begin{array}{ll}
                        f[c_a][x][k], & x>0; \\
                        F[c_a], & x=0.
                      \end{array}
                    \right.
\end{eqnarray}

See Figure~\ref{fig:12} for an illustration of \eqref{eq:12}. We omit the easy proof of \eqref{eq:12}.

\subsubsection{Running time analysis of the above algorithm}

    Let $d_v$ be the number of children of $v$.
    It takes $O(j)$ time for computing $g[a][j][k]$ based on \eqref{eq:12},
        so computing the $g$'s take $O(\sum_v d_v w_0^2 n)=O(w_0^2 n^2)$ time.
    It is easy to compute $f$ using \eqref{eq:11} and $F$ using \eqref{eq:10},
        within $O(w_0 n^2)$ time.
    So, the total time is $O(w_0^2 n^2)$. (Be aware that $w_0\leq n$.)

Clearly, the analysis also holds for the integer weight case. Therefore,

\begin{theorem}
When all the nodes have integer weights, the tree partition problem can be solved in $O(w_0^2n^2)$ time.
\end{theorem}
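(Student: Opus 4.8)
The plan is to verify the correctness of the dynamic program defined by \eqref{eq:9}--\eqref{eq:12} and then to carry out the running-time count, adapting both to integer weights. Since the recurrences and the state definitions are already in place, the work of the proof is to (i) justify the three recurrences, (ii) tally the cost, and (iii) explain why nothing breaks when weights are arbitrary nonnegative integers rather than units.

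First I would root the tree at an arbitrary vertex, so that the subtree $T_v$, its growing component (the one containing $v$), and its grown part are well defined. For integer weights the only change is in the meaning of the index $j$: instead of the \emph{number} of nodes in the growing component, $j$ now records its \emph{total weight}. Because every weight is a nonnegative integer and the growing component must obey $\weight \leq w_0$, the index $j$ still ranges over the integers $\{0,1,\ldots,w_0\}$, so the table $f[v][j][k]$ keeps the same size $O(w_0 n)$ per vertex, and the knapsack constraint is enforced automatically by capping $j$ at $w_0$.

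Next I would establish the recurrences. Equation \eqref{eq:10} is immediate: an optimal partition of $T_v$ charges its growing component as a final grown component of cost $s_k$ (with $s_k$ the largest level appearing in it), and minimizing over the admissible pairs $(j,k)$ recovers $F[v]$. Equation \eqref{eq:11} is a definitional identity once $T_v=T_v^d$. The substantive step is \eqref{eq:12}, which I would prove by the standard ``peel off the last child'' argument: in any partition of $T_v^a$, the piece of the subtree hanging at the $a$-th child $c_a$ is partitioned internally, and its own growing component is either merged with $v$'s growing component through the edge $v c_a$ or closed off as a grown component. The merged case contributes $f[c_a][x][k]$ for the weight $x$ it adds to the growing component, while the closed case contributes $F[c_a]=\Delta_0$. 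Pairing this with the best partition of $T_v^{a-1}$ whose growing component has weight $j-x$, and minimizing over the split, yields \eqref{eq:12}; the level constraint ``no $v'$ with $s_{v'}>s_k$'' is preserved because both pieces individually respect it.

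Finally I would assemble the running time. Computing one entry $g[a][j][k]$ from \eqref{eq:12} costs $O(j)=O(w_0)$ time for the minimization over $j'$. There are $O(w_0 n)$ pairs $(j,k)$ for each of the $\sum_v d_v = n-1$ child-slots, so the whole $g$-table is filled in $O\!\left(\sum_v d_v \cdot w_0 n \cdot w_0\right)=O(w_0^2 n^2)$ time. Reading off $f$ via \eqref{eq:11} is free, and evaluating every $F[v]$ via \eqref{eq:10} costs $O(w_0 n)$ per vertex, i.e.\ $O(w_0 n^2)$ overall, which is dominated. Hence the total is $O(w_0^2 n^2)$, as claimed (recalling $w_0\leq n$ only in the unit case, which is irrelevant here). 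I expect the only delicate point to be the case analysis in the proof of \eqref{eq:12} --- specifically, checking that merging through the edge $v c_a$ never double-counts a component's cost and that the $\Delta_0 = F[c_a]$ branch correctly charges the child's growing component once it is closed; everything else is routine bookkeeping.
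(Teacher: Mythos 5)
Your proposal is correct and takes essentially the same route as the paper: the identical DP tables $f$ and $g$ with recurrences \eqref{eq:9}--\eqref{eq:12}, the same per-entry $O(j)=O(w_0)$ minimization cost, and the same tally $O\left(\sum_v d_v \cdot w_0^2 n\right)=O(w_0^2n^2)$ with the $f$- and $F$-computations dominated. The paper merely asserts that the unit-weight scheme ``extends to the integer weight case,'' and your reinterpretation of the index $j$ as the total weight of the growing component (capped at $w_0$) is exactly the intended adaptation, so nothing is missing.
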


In the following, with a much more careful running time analysis,
  we show that the above dynamic programming approach for the unit weight case costs only $O(w_0n^2)$ time, rather than $O(w_0^2n^2)$ time as what it looks like.

\begin{theorem}\label{thm:running_time_cubic}
When all the nodes have a unit weight, the tree partition problem can be solved in $O(w_0n^2)$ time.
\end{theorem}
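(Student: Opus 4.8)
The algorithm itself is exactly the one already described; only its running time needs a sharper accounting, so the plan is to re-examine the cost of the merge step~\eqref{eq:12} under the assumption $w_i\equiv 1$. Here the decisive effect of unit weights is that a growing component of $T_v^a$ can hold at most $\min(|T_v^a|,w_0)$ vertices, so $g[a][j][k]=\infty$ whenever $j>\min(|T_v^a|,w_0)$, and those entries cost nothing to produce. Moreover, in the minimisation of~\eqref{eq:12} the feasible split sizes are confined to $1\le j'\le\min(|T_v^{a-1}|,w_0)$ and $0\le j-j'\le\min(|T_{c_a}|,w_0)$, since $j'$ counts vertices of the growing component of $T_v^{a-1}$ while $j-j'$ counts the vertices contributed by $T_{c_a}$. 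Consequently, for a fixed triple $(v,a,k)$ the work spent over all $j$ equals the number of feasible pairs $(j,j')$, namely
\[
O\!\left(\min(|T_v^{a-1}|,w_0)\cdot\min(|T_{c_a}|,w_0)\right),
\]
rather than the $O(w_0^2)$ charged by the crude analysis.

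Summing this estimate over all children $a$ of all vertices $v$ bounds the total work for producing the whole array $g[\,\cdot\,][\,\cdot\,][k]$, for a single $k$, by
\[
\Sigma\;:=\;\sum_{v}\sum_{a=1}^{d_v}\min(|T_v^{a-1}|,w_0)\cdot\min(|T_{c_a}|,w_0).
\]
The heart of the theorem is the bound $\Sigma=O(w_0 n)$, which improves on the trivial $\sum_v d_v\,w_0^2=O(w_0^2 n)$ by a full factor of $w_0$; this is precisely the nontrivial observation on which the whole analysis rests. Granting it, I would finish as follows: for every $k$ the number of feasible pairs is at most $\Sigma$, so summing over the $n$ choices of $k$ costs $O(w_0 n^2)$ for all the $g$'s; recovering $f$ through~\eqref{eq:11} is free, and evaluating $F[v]$ through~\eqref{eq:10} costs $O(w_0 n)$ per vertex, i.e.\ $O(w_0 n^2)$ in total. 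Adding these gives the claimed $O(w_0 n^2)$ bound.

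The main obstacle is proving $\Sigma=O(w_0 n)$, and here I would emphasise that no purely local (per-vertex) estimate suffices: replacing one capped factor by $w_0$ and summing the other only yields $O(w_0^2 n)$, and the natural potential or telescoping arguments leak an extra factor of $w_0$ or of $\log n$. The plan is a global charging argument that genuinely exploits the cap. Call a merge $(v,a)$ \emph{heavy} when both $|T_v^{a-1}|\ge w_0$ and $|T_{c_a}|\ge w_0$; each costs $w_0^2$. The number of heavy merges I would control using that the \emph{minimal} subtrees of size at least $w_0$ are vertex-disjoint (so at most $n/w_0$ of them exist), together with a charging of the heavy merges onto these subtrees; this makes the total heavy cost $O(w_0^2\cdot n/w_0)=O(w_0 n)$. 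For the remaining (light) merges, where one side has fewer than $w_0$ vertices, I would amortise by charging each vertex $O(w_0)$ in total, following how the capped accumulated size $\min(|T_v^{a}|,w_0)$ grows as children are absorbed from left to right so that the contributions telescope along root-to-vertex paths. Making both halves of this argument airtight -- in particular ensuring that the heavy charging is nearly injective and that the light amortisation incurs no spurious $\log n$ factor -- is the delicate step, and it is exactly the content of the key combinatorial lemma underlying the theorem.
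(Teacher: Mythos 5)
Your reduction is the same as the paper's: the per-merge cost bound $O(\min(|T_v^{a-1}|,w_0)\cdot\min(|T_{c_a}|,w_0))$, the quantity $\Sigma$ (the paper calls it $S$), and the final assembly (sum over the $n$ values of $k$, plus $O(w_0n^2)$ for \eqref{eq:11} and \eqref{eq:10}) all coincide with the paper's proof. What is missing is the proof of the one statement that carries all the difficulty, $\Sigma=O(w_0 n)$: you rightly call it the heart of the theorem, but then only sketch a charging plan and explicitly defer ``making both halves airtight.'' The paper closes exactly this gap by a different route: it converts $T$ into a binary tree $T'$ (left brother becomes left child, rightmost child becomes right child), under which each term $\min(|T_v^{a-1}|,w_0)\cdot\min(|T_{c_a}|,w_0)$ equals $\min(L_{v'}+1,w_0)\cdot\min(R_{v'}+1,w_0)$ at the corresponding node $v'$, and then proves the self-contained Lemma~\ref{lemma:C}: $\sum_{v}\min(L_v,K)\cdot\min(R_v,K)=O(|V|\cdot K)$ for every binary tree, via four cases --- both sides small (\eqref{eq:lemmaC-1}, decomposition into maximal small subtrees), one side small and one large (\eqref{eq:lemmaC-2}, \eqref{eq:lemmaC-3}, an injectivity argument), both sides large (\eqref{eq:lemmaC-4}, a forest-counting argument). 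Your proposal stops where this work begins.

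Moreover, the heavy half of your sketch fails as stated. You want to charge heavy merges (those with $|T_v^{a-1}|\ge w_0$ and $|T_{c_a}|\ge w_0$) ``nearly injectively'' onto the vertex-disjoint minimal subtrees of size at least $w_0$. This cannot work: let $w_0\ge 2$, take a path $v_1,\dots,v_m$ in which each $v_i$ ($i<m$) has first child $u_i$ whose subtree has exactly $w_0-1$ vertices and second child $v_{i+1}$, and hang a single star $M$ with $w_0$ vertices below $v_m$. Every merge $(v_i,2)$, $i=1,\dots,m-1$, is heavy, since $|\{v_i\}\cup T_{u_i}|=w_0$ and $|T_{v_{i+1}}|\ge w_0$; yet the whole tree contains exactly one minimal subtree of size at least $w_0$, namely $M$ (each $T_{u_i}$ has only $w_0-1$ vertices), so all $\Theta(n/w_0)$ heavy merges would have to be charged to the single target $M$. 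The conclusion ``the number of heavy merges is $O(n/w_0)$'' is still true, but for a different reason --- in this example the sets $\{v_i\}\cup T_{u_i}$ are pairwise disjoint and each has $w_0$ vertices --- and turning that into a general proof requires a global argument (an induction over the tree, or the paper's forest-of-big-nodes count behind \eqref{eq:lemmaC-4}), not the minimal-subtree charging. The light half has the same character: ``charge each vertex $O(w_0)$'' is the right target, but a vertex can lie in the small side of merges at several of its ancestors, and ruling out over-charging needs precisely the injectivity observation of \eqref{eq:lemmaC-2}/\eqref{eq:lemmaC-3} (small side paired with a large side can occur at most once per vertex along a root path) together with the maximal-small-subtree decomposition of \eqref{eq:lemmaC-1} for merges whose sides are both small; an appeal to telescoping does not substitute for these.
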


The following lemma is crucial for the proof of Theorem~\ref{thm:running_time_cubic} below.

\begin{lemma}\label{lemma:C}
Consider any binary tree $T$ and an integer $K\geq 1$. Let $V$ denote the set of nodes of $T$.
For each $v\in V$, denote by $L_v$ and $R_v$ the number of nodes in the left subtree of $v$
   and the number of nodes in the right subtree of $v$, respectively. We have
\begin{equation}\label{eq:lemmaC-0}
\sum_{v\in V} \min\left(L_v,K\right) \cdot \min\left(R_v,K\right) = O(|V|\cdot K).
\end{equation}
\end{lemma}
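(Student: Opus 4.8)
The plan is to bound the sum by splitting each term according to the relative sizes of the two subtrees at $v$, handling the ``balanced--small'' contribution by a local pair-counting argument and the ``unbalanced'' contribution by two combinatorial estimates. Write $\ell_v=\min(L_v,R_v)$ and $h_v=\max(L_v,R_v)$, and split the nodes of $V$ into three classes: (i) $h_v\le K$; (ii) $\ell_v\le K<h_v$; (iii) $K<\ell_v$. In class (i) the term equals $L_vR_v$; in classes (ii) and (iii) it equals $K\cdot\min(\ell_v,K)$, namely $K\ell_v$ and $K^2$ respectively. I stress that the split is essential: a one-line bound such as $\min(L_v,K)\min(R_v,K)\le K\min(L_v,R_v,K)$ is too weak, since summing the right-hand side loses a spurious $\Theta(\log K)$ factor (the deep, both-subtrees-small nodes are responsible).

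For class (i) I would read $L_vR_v$ as the number of node pairs $\{x,y\}$ whose lowest common ancestor is $v$, with $x,y$ in different children of $v$. Since $h_v\le K$ forces $|T_v|\le 2K+1$, every counted pair has $|T_{\mathrm{lca}(x,y)}|\le 2K+1$. Fixing $x$, all of its partners then lie inside $T_{a^{*}(x)}$, where $a^{*}(x)$ is the topmost ancestor of $x$ whose subtree has at most $2K+1$ nodes; this uses that the ancestors of $x$ with small subtree form a contiguous chain. As $|T_{a^{*}(x)}|\le 2K+1$, each $x$ has at most $2K$ partners, so class (i) contributes $O(|V|K)$.

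For classes (ii) and (iii) it suffices to prove $\sum_{v:\,\ell_v\le K<h_v}\ell_v=O(|V|)$ and $\lvert\{v:\ell_v>K\}\rvert=O(|V|/K)$. The first rests on a locality observation: a fixed node $u$ lies in the light subtree $T_{\mathrm{lc}(v)}$ of \emph{at most one} class-(ii) node $v$. Indeed, if $u$ lay in two such light subtrees, necessarily nested as $T_{\mathrm{lc}(v)}\subseteq T_{\mathrm{lc}(v')}$, then the heavy sibling $T_{\mathrm{hc}(v)}$ (which has more than $K$ nodes) would also sit inside $T_{\mathrm{lc}(v')}$ and be disjoint from $T_{\mathrm{lc}(v)}$, forcing $\ell_{v'}>K$ and contradicting $v'\in(\mathrm{ii})$. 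Charging each node to its unique such $v$ yields $\sum_{(\mathrm{ii})}\ell_v\le |V|$.

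The remaining step---which I expect to be the hardest---is the branching estimate $\beta_t:=\lvert\{v:\min(L_v,R_v)\ge t\}\rvert\le 2|V|/t$, applied with $t=K+1$. It resists a straightforward induction on subtree size because the additive $+1$ incurred whenever a node qualifies is never absorbed by a bound linear in the size. I would argue globally instead. Form the forest $\mathcal F$ on $\beta_t$ by linking each node to its nearest proper $\beta_t$-ancestor, and assign to each $v\in\beta_t$ its private region $Q_v=T_v\setminus\bigcup_{v'}T_{v'}$, where $v'$ ranges over the $\mathcal F$-children of $v$; the regions $Q_v$ are pairwise disjoint. Call $v$ \emph{good} if one of its subtrees contains no $\mathcal F$-child: that whole subtree then lies in $Q_v$, and since both subtrees of a $\beta_t$-node have at least $t$ nodes we get $|Q_v|\ge t$, so the good nodes number at most $|V|/t$. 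Every non-good node has an $\mathcal F$-child in each subtree, hence at least two $\mathcal F$-children; in a forest the number of nodes with two or more children is at most the number of leaves, and every leaf is good, so the non-good nodes are at most the good ones. Thus $\beta_t\le 2|V|/t$, which closes classes (ii)--(iii). Adding the three classes gives the desired $O(|V|K)$ bound; the unifying obstacle throughout is eliminating the spurious logarithmic factor, which the three-way split and these two disjointness arguments are designed to remove.
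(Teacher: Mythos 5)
Your proof is correct and follows essentially the same route as the paper's: the same split of nodes by comparing $L_v$ and $R_v$ with $K$ (you merely merge the two symmetric unbalanced cases into one), the same charge-each-node-at-most-once argument for the case $\ell_v\le K<h_v$, and the same counting of doubly-big nodes via the forest they induce, where your good/non-good distinction plays exactly the role of the paper's gray/black degree argument. The only cosmetic difference is the both-small case, which you handle by counting pairs through the topmost small ancestor of each node, while the paper sums $|T_{v_i}|^2$ over the maximal small subtrees $T_{v_i}$; both yield $O(|V|\cdot K)$ in the same way.
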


\begin{proof}
The following equations together imply \eqref{eq:lemmaC-0}.
\begin{eqnarray}
\sum_{v\in V,L_v<K,R_v<K} L_v \cdot R_v &=& O(|V|\cdot K). \label{eq:lemmaC-1}\\
\sum_{v\in V,L_v <K,R_v\geq K} L_v \cdot K &=& O(|V|\cdot K). \label{eq:lemmaC-2}\\
\sum_{v\in V,L_v\geq K,L_v<K} K\cdot R_v &=& O(|V|\cdot K). \label{eq:lemmaC-3}\\
\sum_{v\in V,L_v\geq K,R_v\geq K} K\cdot K &=& O(|V|\cdot K). \label{eq:lemmaC-4}
\end{eqnarray}

\noindent \emph{Proof of \eqref{eq:lemmaC-1}.}
Call a node $v$ \emph{small} if $L_v$ and $R_v$ are both smaller than $K$.
Denote by $v_1,\ldots,v_s$ all those nodes that are small themselves and their parents are not small.
Moreover, denote the subtree rooted at $v_i$ by $T_{v_i}$ as before.
Clearly,
$$|T_{v_i}|<L_{v_i}+R_{v_i}+1<2K \text{ and } \sum_{v\in T_{v_i}} L_v\cdot R_v \leq |T_{v_i}|^2.$$

It follows that
$$\sum_{v\in V,L_v<K,R_v<K} L_v\cdot R_v
  = \sum_{i=1}^{s} \sum_{v\in T_{v_i}} L_v \cdot R_v
  =\sum_{i=1}^{s} |T_{v_i}|^2
  \leq \sum_{i=1}^{s} 2K\cdot |T_{v_i}|
   \leq 2K \cdot |T|.
$$

The last step is due to the observation that $T_{v_1},\ldots,T_{v_s}$ are disjoint.

\medskip \noindent \emph{Proof of \eqref{eq:lemmaC-2}.}
It reduces to proving that
    $\Sigma_{v\in V,L_v <K,R_v\geq K} L_v = O(|V|)$,
  which further reduces to showing that
      (i) there are $O(|V|)$ vertex pairs $(v,u)$ satisfying the constraint that
            $L_v <K,R_v\geq K$, and $u$ is in the left subtree of $v$.

Fact (i) follows from the observation that for each $u$, there is at most one pair
  $(v,u)$ satisfying the mentioned constraint. See the left picture of Figure~\ref{fig:lemmaC}
   for an illustration. Suppose to the opposite that $(v,u),(v',u)$ are two such pairs, where $v'$ is an ancestor of $v$.  We have $L_{v'}\geq K$ because $R_v\geq K$, contradictory.

\medskip \noindent \emph{Proof of \eqref{eq:lemmaC-3}.}
   This is symmetric to the proof of \eqref{eq:lemmaC-2}.

\begin{figure}[h]
\begin{minipage}[b]{.35\textwidth}
  \centering  \includegraphics[width=.5\textwidth]{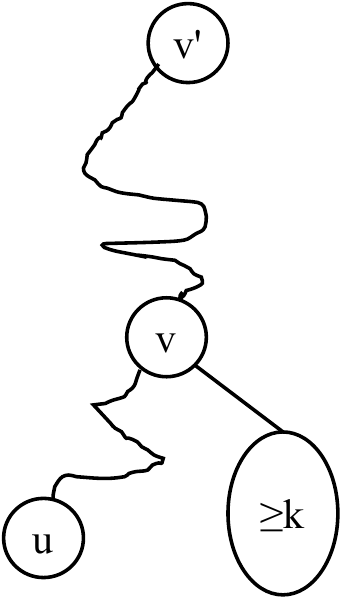}\\
\end{minipage}
\begin{minipage}[b]{.63\textwidth}
  \centering  \includegraphics[width=\textwidth]{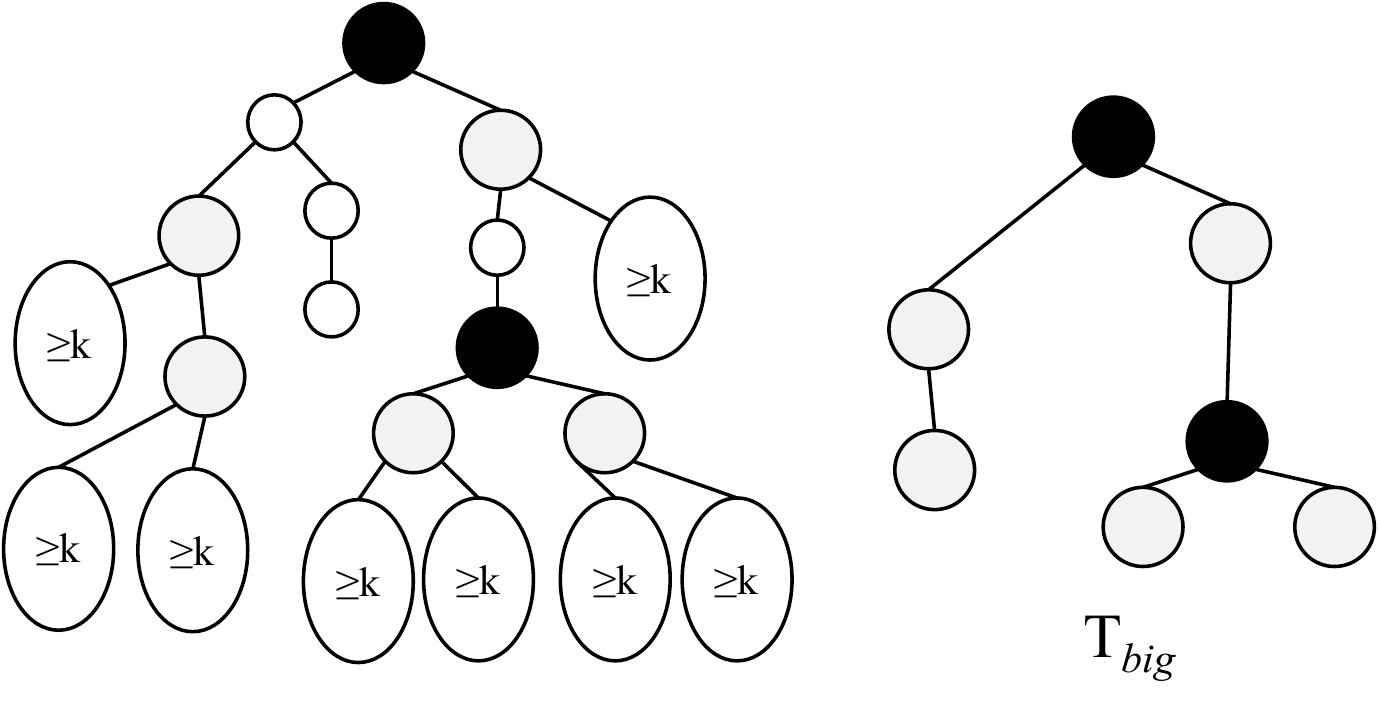}\\
\end{minipage}
  \caption{Illustration of the proof of \eqref{eq:lemmaC-2} and \eqref{eq:lemmaC-4}}
   \label{fig:lemmaC}
\end{figure}

\medskip \noindent \emph{Proof of \eqref{eq:lemmaC-4}.}
   Call a node $v$ \emph{big} if both $L_v$ and $R_v$ are at least $K$.
   The set of big nodes form a tree $T_{big}$, as shown in the right picture of Figure~\ref{fig:lemmaC}.
   Those big nodes with degree 0 or 1 in $T_{big}$ are colored gray in the figure
       and those with degree 2 in $T_{big}$ are colored black.
   Because the nodes with degree 2 are less than the nodes with degree 0,
       the number of big nodes is less than 2 times the number of gray nodes.
   Clearly, to prove \eqref{eq:lemmaC-4},
     it reduces to proving that the number of gray nodes is  $O(|V|/K)$.
	On the other hand, from the picture we can see each gray node is associated with at least $k$ nodes in $T$
     whereas each node is associated to at most one gray node, therefore the number of gray nodes is at most $|V| / (K+1)$.
\end{proof}

We now come back to the proof of Theorem~\ref{thm:running_time_cubic}.

\begin{proof}[Proof of Theorem~\ref{thm:running_time_cubic}]
The bottleneck of the algorithm lies in computing array $g$.
For the $a$-th branch of vertex $v$ $(1<a\leq d_v)$,
  and for $1\leq j\leq w_0, 1\leq k \leq n$,
  we shall compute $g[a][j][k]=min_{1 \leq j'\leq j} \{g[a-1][j'][k] + \Delta_{j-j'} \}$,
where $\Delta_{j-j'}$ can be obtained in $O(1)$ time.
Notice that $j'\leq \min(|T_v^{a-1}|,w_0)$ and $j-j'\leq \min(|T_{c_a}|,w_0)$
   (recall $c_a$ denotes the $a$-th son of $v$ and $T_{c_a}$ denotes the $a$-th subtree of $v$). Therefore, computing $\{g[?][?][k]\}$ for any fixed $k$ requires time proportional to
$$S:=\sum_v \sum_{1<a\leq d_v} \min(|T_v^{a-1}|,w_0) \cdot \min(|T_{c_a}|,w_0).$$
Hence it reduces to showing that $S=O(nw_0).$

Next, convert the given tree $T$ into a binary tree $T'$.
Each node $v$ in $T$ corresponds to a node $v'$ in $T'$.
  The edges in $T'$ are connected as follows.
If the rightmost child of $v$ is $c$,  the right child of $v'$ is $c'$.
If the brother to the left of $v$ is $b$, the left child of $v'$ is $b'$.
Applying this conversion between $T$ and $T'$, we have

\begin{align*}
 S=&\sum_v \sum_{1<a\leq d_v} \min(|L_{c'_a}+1|,w_0) \cdot \min(|R_{c'_a}+1|,w_0)\\
 \leq & \sum_{v'\in V(T')}  \min(L_{v'}+1,w_0)\cdot  \min(R_{v'}+1,w_0)\\
 \leq & \sum_{v'\in V(T')}  \bigg(\min(L_{v'},w_0)+1\bigg)\cdot
			 \bigg(\min(R_{v'},w_0)+1\bigg)\\
 \leq & \sum_{v'\in V(T')} \bigg( \min(L_{v'},w_0) \cdot \min(R_{v'},w_0)+2w_0+1\bigg)
	= O(w_0 |V(T')|).
\end{align*}

The last equality follows from Lemma~\ref{lemma:C}.
\end{proof}

\begin{remark}
Whether Lemma~\ref{lemma:C} has been reported in the literature is not known to us. We guess that this lemma has been found by pioneers in this field, as it is useful in analyzing the running time of similar algorithms on trees. By the way, we wish to see extensions and more elegant proofs of this lemma in the future.
\end{remark}

\section{Summary}

A linear time algorithm is proposed for the Sum-of-Max sequence partition problem under a Knapsack constraint,
  which arises in cargo delivery, telecommunication, and parallel computation.
The algorithm applies a novel dynamic programming speed-up technique
  that partitions the candidate options into groups, such that the options in each group are FIFO or FILO ---
    hence the selection of the best option becomes easy by using monotonic queues and stacks.
In order to efficiently put the options into correct groups, two points are crucial:
  first, introduce the concept of renew for distinguishing options in different states;
  second, use a counter for each option that stores its renewing times in future.
For completeness, we also study the tree partition problem, but it is NP-complete.

Our dynamic programming speed-up technique is applicable in solving
  some variant problems (such as the partition and assign  problems; see problems~\ref{problem:extended}
  and \ref{problem:extended-generalized} in subsection~\ref{subsect:extension}).
In the future, it worths exploring more applications of our speed-up technique
  that divides candidate options into (FIFO or FILO) groups.

\subparagraph{Acknowledgement}
We thank the anonymous reviewers for
   giving us many constructive suggestions on improving the quality of this paper.

\bibliographystyle{elsarticle-num}
\bibliography{CUT}

\clearpage
\appendix
\section{Experimental results}\label{sect:experiment}

We implement the $O(n\log n)$ time algorithm (shown in subsection~\ref{subsect:alg-heap-nlogn}) and
  the $O(n)$ time algorithm (shown in subsection~\ref{subsect:alg-final-on}) by \emph{C/C++ programs},
and test these programs on several test cases and record their running time.

\paragraph{Test cases} We generate two types of test cases, the \emph{special case} where $s_1 > \dots  >s_n$ and $w_0 = n$,
  and the \emph{general case} where $s_1,\ldots,s_n,w_0$ are random.
The $w_i$'s are all set to 1 in all test cases.
(Under the special case, $J$ contains $\Theta(i)$ options in the iteration for computing $F[i]$.
  The special case is the worst case.)
We selects 46 different values for $n$, ranging from 10-1000000 (see Figure~\ref{fig:5}).
\begin{figure}[h]
  \centering
  \includegraphics[width=1.0\textwidth]{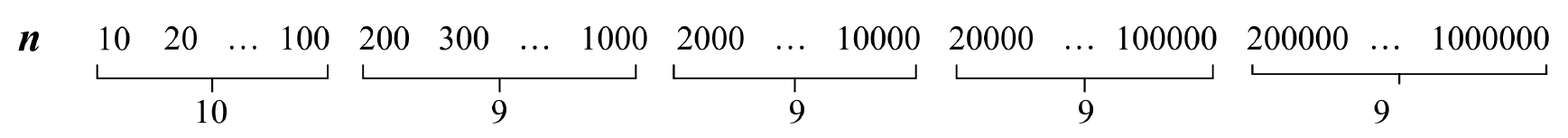}\\
  \caption{The selection of the $n$ value of the number of vertices}\label{fig:5}
\end{figure}

\begin{figure}[h]
%\setcaptionwidth{6cm}
\begin{minipage}{0.5\textwidth}
  \centering
  \includegraphics[width=1.0\textwidth]{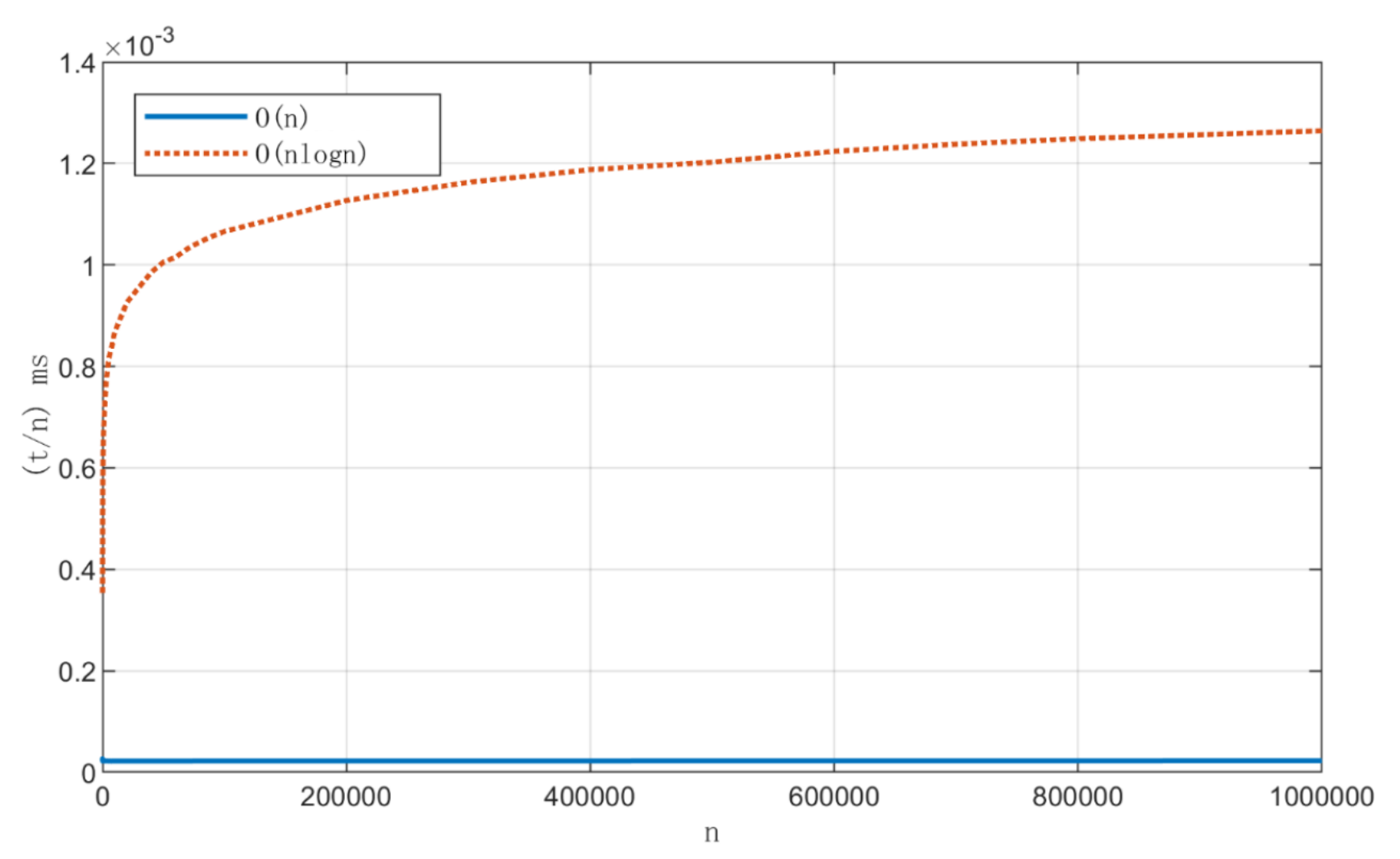}\\
  \caption{Under special case.}\label{fig:7}
\end{minipage}
\begin{minipage}{0.5\textwidth}
  \centering
  \includegraphics[width=1.0\textwidth]{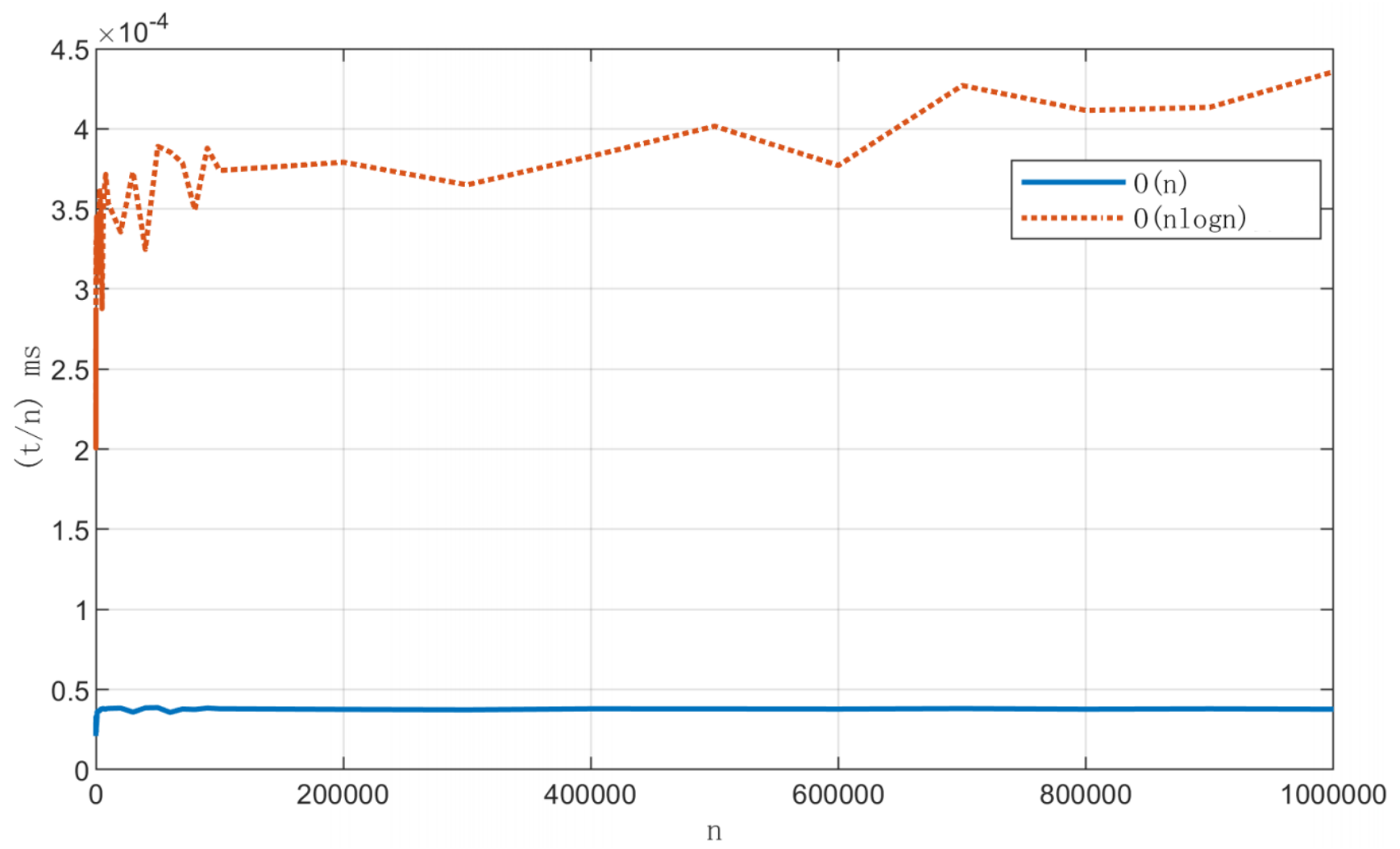}\\
  \caption{Under general case.}\label{fig:8}
\end{minipage}
\end{figure}

Figures~\ref{fig:7} and \ref{fig:8} show the experiment results.
In these graphs, the abscissa indicates the number of vertices $n$, and the ordinate indicates the average of $t/n$, where $t$ represents the running time.
The $t/n$-curve of the $O(n\log n)$ algorithm (orange) grows like a logarithmic function (Figure~\ref{fig:7}), whereas
  the $t/n$-curve of the $O(n)$ algorithm (blue) grows like a constant function.
Therefore, our experimental results are consistent with the analysis of the algorithms.

In both special and general cases, the linear algorithm performs much better.
In particular, it is 60 times faster under the special case when $n=1000000$.

\paragraph{Experiment environment} Operating system: Windows 10.
  CPU: Intel Core i7-10700@2.90GHz 8-core. Memory: 64GB.

\section{The C/C++ code of our linear time algorithm}\label{sect:code}

\begin{small}
\begin{lstlisting}
#include <iostream>
using namespace std;

const int maxn = 100010;
int n, s[maxn], counter[maxn], s_max[maxn], J[maxn], w[maxn];
long long w0, W[maxn];  //W is the prefix sum
long long cost[maxn], f[maxn];

void input_and_preprocess(){
  cin >> n >> w0;
  for (int i = 1; i <= n; i++) cin >> w[i];
  for (int i = 1; i <= n; i++) cin >> s[i];
  int l = 0, r = -1, j = 0;
  for (int i = 1; i <= n; i++){
    W[i] = W[i - 1] + w[i];
    while (W[i] - W[j] > w0) j++;
    while (r >= l && W[i] - W[J[l]] > w0) l++;
    while (r >= l && s[J[r]] <= s[i]) counter[J[r--]]++;
    if (r >= l) counter[J[r]]++;
    J[++r] = i;
    s_max[i] = s[J[l + (J[l] == j)]];
  }
}

void compute_and_output(){
  int l = 0, r = -1, j = 0;
  f[n] = -1;
  for (int i = 1; i <= n; i++){
    while (W[i] - W[j] > w0) j++;
    if (j == i) break; // no answer;
    while (r >= l && W[i] - W[J[l]] > w0) l++;
    while (r >= l && s[J[r]] <= s[i]) r--;
    if (r >= l && cost[J[r]] <= f[J[r]] + s[i]){
      cost[J[r]] = f[J[r]] + s[i];
      counter[J[r]]--;
    }
    if (r>l&&counter[J[r-1]]>0&&cost[J[r-1]]<=cost[J[r]]) r--;
    while (r>l&&counter[J[r]]==0&&cost[J[r-1]]>=cost[J[r]]){
      J[r - 1] = J[r]; r --;
    }
    f[i] = f[j] + s_max[i];
    if (r >= l && cost[J[l]] < f[i]) f[i] = cost[J[l]];
    if (r > l && cost[J[r]] < f[i]) f[i] = cost[J[r]];
    cost[i] = -1; J[++r] = i;
  }
  cout << f[n];
}

int main(){
  input_and_preprocess();
  compute_and_output();
  return 0;
}
\end{lstlisting}
\end{small}

\section{Concave 1-d speed-up technique is not applicable}
Z. Galil and K. Park \cite{DP-concave} considered a 1-d dynamic programming equation of formula \eqref{eq:7}, where $E[j]$ can be computed from $F[j]$ in O(1) time, and they pointed out that there are many applications for formula \eqref{eq:7}, e.g. the minimum weight subsequence problem is a special case of this problem.
\begin{equation}\label{eq:7}
	F[i]=\min_{0 \leq j < i} \{E[j]+v(j,i) \}, 0 \leq i \leq n
\end{equation}

Galil and Park designed an ingenious $O(n)$ time algorithm for solving \eqref{eq:7} under the case where $v(j,i)$ satisfies the following \emph{concave property} (briefly, they reduced the problem to solving several totally-monotone matrix searches).

\begin{definition}\label{def:concave}
 The cost function $v$ is concave if it satisfies the quadrilateral inequality:
 \begin{equation}\label{eq:8}
	v(a,c) + v(b,d) \leq v(b,c) + v(a,d), a \leq b \leq c \leq d
\end{equation}
 \end{definition}

We show in the following that the function $S_{a,b}=\max_i \{s_i \mid a \leq i \leq b \}$ is \textbf{not} concave.
  Therefore, the 1-d concave dynamic programming speed-up technique of Galil and Park is \textbf{not} applicable to our circumstance.

Assume $s_1=3, s_2=1, s_3=1, s_4=3$. Then $S_{1,4} = \max\{3,1,1,3\}=3$, $S_{2,3} = \max \{1,1\}=1$, $S_{1,3}=\max\{3,1,1\}=3$, $S_{2,4}=\max\{1,1,3\}=3$. Clearly, $S_{1,3}+S_{2,4}>S_{2,3}+S_{1,4}$, that is, $S_{a,c}+S_{b ,d}>S_{b,c}+S_{a,d}$ for $a=1,b=2,c=3,d=4$, which means that $S_{a,b}$ does not satisfy \eqref{eq:8} and hence is not concave.

\medskip We also mention that the speed-up technique of \cite{DP-convex} is not applicable.
\clearpage

\begin{figure}[h]
\begin{minipage}[c]{.4\textwidth}
  \includegraphics[width=.8\textwidth]{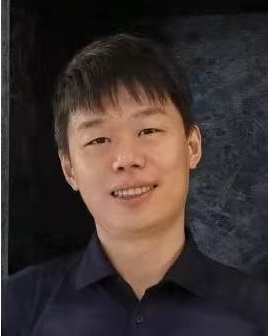}
\end{minipage}
\begin{minipage}[c]{.6\textwidth}
Professor \emph{Kai Jin} was born in Changsha, Hunan, China, in 1986.
He received the B.S. degree and Ph.D. degree in computer science and technology from Tsinghua University, Beijing, China, in 2008 and 2016, respectively.
He was a Postdoc with the HKU from 2016 to 2018
  and with the HKUST from 2018 to 2020.

Since 2020, he joined the School of Intelligent Systems Engineering in Sun Yat-sen University
  as an Associated Professor.
His research area includes algorithm design, combinatorics, game theory, and computational geometry.
\end{minipage}
\begin{minipage}[c]{\textwidth}
~\vspace{45pt}
\end{minipage}
\begin{minipage}[c]{.4\textwidth}
  \includegraphics[width=.8\textwidth]{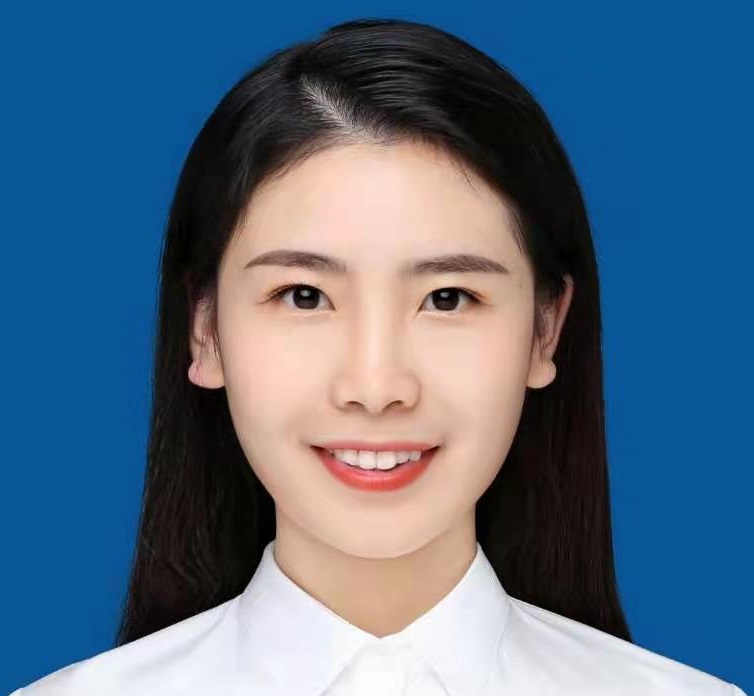}
\end{minipage}
\begin{minipage}[c]{.6\textwidth}
\emph{Danna Zhang} was born in Hengyang, Hunan, China, in 1998. She received the B.S. degree in internet of things from Dalian Maritime University, Dalian, China. She is currently pursuing the M.S. degree in theoretical computer science (supervised by Prof Jin) at Sun Yat-Sen University, Shenzhen, Guangdong, China. Her main research field is algorithm design.
\end{minipage}
\begin{minipage}[c]{\textwidth}
~\vspace{45pt}
\end{minipage}
\begin{minipage}[c]{.4\textwidth}
  \includegraphics[width=.8\textwidth]{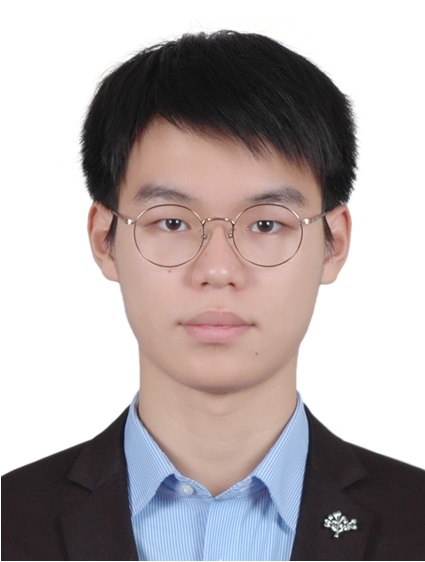}
\end{minipage}
\begin{minipage}[c]{.6\textwidth}
Canhui Zhang was born in JingZhou, Hubei, China, in 2000. He is currently pursuing the B.S. degree in intelligent science and technology with Sun Yat-sen University, Shenzhen, Guangdong, China. His main research interests include algorithm design.
\end{minipage}
\end{figure}

\end{document}